\newcommand{\singlespacing}{\let\CS=\@currsize\renewcommand{\baselinestretch}{1}\tiny\CS}
\newcommand{\be}{\begin{equation}}
\newcommand{\ee}{\end{equation}}
\newcommand{\beanno}{\begin{eqnarray*}}
\newcommand{\eeanno}{\end{eqnarray*}}
\newcommand{\bea}{\begin{eqnarray}}
\newcommand{\eea}{\end{eqnarray}}
\newcommand{\ba}{\begin{array}}
\newcommand{\ea}{\end{array}}
\newcommand{\bc}{\begin{center}}
\newcommand{\ec}{\end{center}}
\newcommand{\ds}{\displaystyle}
\renewcommand{\baselinestretch}{1.40}
\newtheorem{theorem}{Theorem}[section]
\newtheorem{lemma}{Lemma}[section]
\begin{document}
\title{\sc A new decision theoretic sampling plan for type-I  and type-I hybrid  censored samples from the exponential distribution}
\author{Deepak Prajapati$^{1}$, Sharmistha Mitra$^{1}$ and Debasis Kundu$^{1,2}$} 
\date{}
\maketitle
\begin{abstract}
The study proposes a new decision theoretic sampling plan (DSP) for Type-I and  Type-I hybrid censored samples when the lifetimes of individual items are exponentially distributed with a scale parameter. The DSP is based on an estimator of the scale 
parameter which always exists, unlike the MLE which may not always exist. Using a quadratic loss function and a decision function based on the proposed estimator, a DSP is derived. To obtain the optimum DSP, a finite algorithm is used. Numerical results demonstrate  that in terms of the Bayes risk, the optimum DSP is as good as the Bayesian sampling plan (BSP) proposed by \cite{lin2002bayesian} and \cite{liang2013optimal}.  The proposed DSP performs better than the sampling plan of \cite{Lam1994bayesian} and \cite{lin2008-10exact} 
in terms of Bayes risks.  The main advantage of the proposed DSP is that for higher degree polynomial and non-polynomial loss functions, it can 
be easily obtained as compared to the BSP.  
\end{abstract}
\noindent {\sc Keywords and Phrases:} Exponential distribution, Type-I and Type-I hybrid censoring, Decision theoretic sampling plan, Bayes risk, Bayesian sampling plan.\\
\noindent\rule{16.5cm}{0.4pt}\\
\noindent$^1$ Department of Mathematics and Statistics, Indian Institute of Technology Kanpur, India. \\
\noindent$^2$ Corresponding author e-mail: kundu@iitk.ac.in

\section{\scshape Introduction}
The sampling plan is an important instrument of any quality control experiment, which is used to test the quality of batch of items. 
A good sampling plan is important for manufacturers because a batch of items manufactured by them  at the acceptable level of quality 
will have a good chance to be accepted by the plan. In the decision-theoretic approach, a sampling plan is determined by making an optimal decision on the basis of maximizing the return or minimizing the risk. So, for the economical point of view, it is more reasonable and realistic approach and 
therefore, it is widely employed by many statisticians. An extensive amount of work has been done along this line, see, for example,  \cite{hald}, \cite{FM1974}, \cite{yeh1988}, \cite{Lam1994bayesian}, \cite{lin2002bayesian}, \cite{Huang2002-4lin}, \cite{Chenchou2004},     \cite{lin2008-10exact}, \cite{liang2013optimal},\cite{Tsai2014}, and \cite{yang2015optimal}.

 In most of the life testing experiments, censoring is inevitable, i.e., the experiment terminates before all the experimental items fail.
As a common practice, we put $n$ items on test and terminate the test when a preassigned $r$ number of items fail. This is known as the 
Type-II censoring, which ensures $r$ number of failures. But, in this case the experimental time would be unusually long for high quality items. To tackle this problem, the Type-I censoring scheme is used, in which we put $n$ items on test and terminate the test at a preassigned time $\tau$, no matter how many failures happen before the time $\tau$. \cite{Lam1994bayesian} has provided a Bayesian sampling plan 
for a Type-I censoring scheme based on a suitable  decision function and when the loss function is quadratic. \cite{lin2002bayesian} has proved that Lam's sampling plan is neither optimal nor Bayes and they have provided a Bayesian sampling plan in this case. 
\par
The hybrid censoring is more economical and logical because it combines the advantages of both types of censoring.  In the Type-I hybrid censoring the experiment is terminated at the time $\tau^*=min\{X_{(r)},\tau\}$, where $\tau$ is a fixed time and $X_{(r)}$ is the time to the $r^{th}$ failure. In a Type-II hybrid censoring the experiment is terminated at the time $\tau^*=max\{X_{(r)},\tau\}$. 
\cite{lin2008-10exact} derived an optimal sampling plan for both hybrid censoring schemes using the Bayesian approach. \cite{liang2013optimal} found the exact Bayes decision function and derived an optimum  Bayesian sampling plan for the Type-I hybrid censoring based on a 
quadratic loss function. An extensive amount of literature is available on all the above sampling plans which are decision theoretic in nature and are based on the estimator of the mean lifetime of the exponential distribution.
\par
 In this paper, we develop a decision theoretic sampling plan (DSP) for Type-I and  Type-I hybrid censored samples using a decision function which is based on a suitable estimator of $\lambda$ in place of the estimator of the mean lifetime $\ds \theta=\frac{1}{\lambda}$.  We consider the sampling plans $(n, \tau,\zeta)$ under the Type-I censoring and  $(n,r,\tau,\zeta)$ under the Type-I hybrid censoring. 
Here, $n$, $r$ and $\tau$ are same as defined before, and $\zeta$ is the threshold point based on which we take a decision on the batch. Under such censoring schemes, the proposed estimator of $\lambda$ always exists unlike the MLE, which may not always exist. A loss function, which includes the sampling cost, the cost per unit time, the salvage value and the cost due to acceptance of the batch, is used to determine the DSP, by minimizing the Bayes risk. The optimum DSP is obtained for Type-I and  Type-I hybrid censoring and numerically it has been observed that it is as good as the BSP in terms of the Bayes risk. It is also 
observed that the optimum DSP is better than the sampling plan of \ \cite{Lam1994bayesian} and \cite{lin2008-10exact}. Theoretically 
it has been shown that the implementation of the DSP is easier compared to the BSP proposed by \cite{lin2002bayesian} and \cite{liang2013optimal}, for higher degree polynomial and non-polynomial loss functions.
\par
 The rest of the paper is organized as follows. In Section \ref{S_2}, we present the decision function based on an estimator of 
$\lambda$. All necessary theoretical results for Type-I and  Type-I hybrid censoring are provided in Sections \ref{S_3} and \ref{S_4}, respectively. The DSP for higher degree polynomial and for non-polynomial loss functions are presented in Section \ref{S_6}.  Numerical results are provided in Section \ref{S_5}. Finally, we conclude the paper in Section \ref{S_7}. All derivations are provided in the Appendix.
\section{\sc Problem Formulation and the Proposed Decision Rule}
\label{S_2}
Suppose we are given a batch of items and we need to decide whether we want to accept or reject the batch.  It is assumed that lifetimes of these items are mutually independent and follow an exponential distribution with the probability density function (PDF) 
\begin{equation*}
f(x)=\lambda e^{-\lambda x}; \ \ \ \  x>0 ,\ \ \lambda >0.
\end{equation*}
 To conduct a life testing experiment, $n$ identical items are sampled from the batch and placed on test without replacement with a suitable sampling scheme. Under Type-I and  Type-I hybrid censoring schemes, let $\tau^{*}$ denote the duration of the experiment. Then $\tau^{*}=\tau$ in Type-I censoring and $\tau^{*}=min\{X_{(r)},\tau\}$ in  Type-I hybrid censoring.  Note that $\tau^{*}$ is fixed in Type-I censoring and random for Type-I hybrid censoring.  Let $M$ be the number of failures observed before the fixed time $\tau$,  i.e., $ M = max\{i: X_{(i)}\leq \tau \}$. Hence, the observed sample is $(X_{(1)},X_{(2)},\ldots,X_{(M)})$ in Type-I censoring, and $(X_{(1)},X_{(2)},\ldots,X_{(r)})$ or $(X_{(1)},X_{(2)},\ldots,X_{(M)})$ in  Type-I hybrid censoring. Based on the observed sample, we define the decision function as:
\begin{equation}
\delta(\textbf{x}) =
\begin{cases}
d_{0}         & \mbox{if}  \  \widehat{\lambda} < \zeta, \\
d_{1}         & \mbox{if}   \  \widehat{\lambda} \geq  \zeta,
\end{cases}    \label{dec-func}
\end{equation}
where $\widehat{\lambda}$ is a suitable estimator of $\lambda$,  $\zeta>0$ denotes the threshold point based on which we take a decision on the batch whether to accept (action $d_{0}$) or to reject it (action $d_{1}$).
\par
 Next, we consider a loss function which depends upon various costs.   $C_{r}$ is the cost due to rejecting the batch;  $C_s$ is the cost due to per item inspection; $C_{\tau}$ is the cost of per unit time and $g(\lambda)$ is the cost of accepting the batch. If an item does not fail, then the item can be reused with the salvage value $r_{s}$. Combining all these costs, the general form of the loss function (see \cite{liang2013optimal}, \cite{yang2015optimal}) is given as:
\begin{equation}
L(\delta(\textbf{x}),\lambda,n,r,\tau)=
\begin{cases}
nC_s -(n-M)r_{s}+ \tau^{*} C_{\tau} + g(\lambda) &\mbox{if} \ \delta(\textbf{x}) = d_{0}, \\
nC_s -(n-M)r_{s}+ \tau^{*} C_{\tau} + C_r         &\mbox{if} \ \delta(\textbf{x}) = d_{1}.
\end{cases}   \label{loss-func}
\end{equation}
Clearly $C_s$, $C_{\tau}$, $C_r$ and $r_s$ are non-negative where $C_s>r_s$ and $g(\lambda)$  depends on the  parameter $\lambda$. Smaller $\lambda$ indicates better quality of the item. Therefore,  $g(\lambda)$ can take various forms which have to be positive and increasing with $\lambda$. A quadratic loss function has been widely used as an approximation of the true cost function when a batch is accepted (see \cite{yeh1990optimal}, \cite{Lam1994bayesian} and \cite{yeh1995bayesian}). For a better approximation of the true loss function, higher degree polynomial loss function can be considered, i.e., cost of acceptance in the loss function (\ref{loss-func}) is considered as $g(\lambda)=a_0+a_1\lambda+\ldots+a_k\lambda^k$. It is notable that the true form of the loss function can vary because it includes costs that are difficult to recognize. To obtain the Bayes risk of the decision function (\ref{dec-func}) based on the loss function 
(\ref{loss-func}), it is assumed that $\lambda$ follows a gamma $(a,b)$ prior with the following PDF;
\begin{equation}
\pi(\lambda; a,b)=\frac{b^a}{\Gamma(a)}\lambda^{a-1} e^{-\lambda b}, \ \ \ \  \lambda >0 , \ \ a, b >0. \label{prior}
\end{equation}
Next, we determine the optimum DSP $(n_0,\tau_0,\zeta_0)$ for Type-I censoring and $(n_0,r_0,\tau_0,\zeta_0)$ for  Type-I hybrid censoring such that it has the minimum Bayes risk among all possible sampling plans.

\section{\sc Bayes Risk and DSP under Type -I Censoring}
\label{S_3}

\cite{lin2002bayesian} derived the Bayes risk of the BSP for a quadratic loss function assuming $r_s=0$ and $g(\lambda)=a_0+a_1\lambda+a_2\lambda^2$, such that $a_0 > 0$, $a_1 > 0$ and $a_2 > 0$. This form of the loss function is widely used in the literature (see, for example, \cite{hald}; \cite{Lam1994bayesian}; \cite{yeh1995bayesian}). Likewise, we also derive the Bayes risk of the proposed DSP 
for a quadratic loss function with $r_s>0$, i.e., the loss function takes the following form;
\begin{equation}
L(\delta(\textbf{x}),\lambda,n,\tau)=
\begin{cases}
nC_s - (n-M)r_s +  \tau C_{\tau} + a_0 +a_1\lambda +a_2\lambda^2  &\mbox{if} \quad \delta(\textbf{x}) = d_{0}, \\
nC_s - (n-M)r_s + \tau C_{\tau} + C_r         &\mbox{if} \quad \delta(\textbf{x}) = d_{1}.
\end{cases}   \label{new-loss}
\end{equation}
To derive the Bayes risk for the decision function  (\ref{dec-func}),  we define a suitable estimator of  $\lambda$ as follows:
\begin{equation*}
\widehat{\lambda} =
\begin{cases}
 \ \ \ \ \ \  0 \ \ \ \ \ \ \ \ \ \ \ \ \ \    \mbox{if} \ \ \ M=0 \\
 \ \ \ \ \ \ \widehat{\lambda}_M   \ \ \ \ \ \ \ \ \ \ \  \mbox{if} \ \ \  M >0,
\end{cases}
\end{equation*}
where $\widehat{\lambda}_M$ is the MLE of $\lambda$ given by,
\begin{equation*}
\widehat{\lambda}_M =
 \frac{M}{\sum_{i=1}^{M} X_{(i)} +(n-M){\tau}} \ \ \ \ \mbox{if} \ \ \  M >0.\\
\end{equation*}
\noindent Then the Bayes risk is,
\bea
 r(n, \tau, \zeta) & = & E\big\{L(\delta(\textbf{x}),\lambda, n, \tau)\big\} \nonumber \\
            & = &   E_{\lambda}E_{X/\lambda}\big\{L(\delta(\textbf{x}),\lambda, n, \tau)\big\} \nonumber \\
            & = & n(C_s-r_s)+ E(M)r_{s} + \tau C_{\tau} + a_0 + a_1 \mu_1 + a_2 \mu_2 \nonumber \\
            &  & \quad + E_{\lambda}\big\{(C_r - a_0 - a_1 \lambda - a_2 \lambda^2 )P(\widehat{\lambda} \geq \zeta)\big\},\nonumber 
\eea
where $\ds \mu_{i}= E(\lambda^i)$ for $i=1,2$. Now to find an explicit form of the Bayes risk we need to compute $P(\widehat{\lambda} \geq \zeta)$. Note that the distribution function of $\widehat{\lambda}$ can be written as follows
\begin{align}
\label{mix}
P(\widehat{\lambda} \leq x) &= P(M=0)P(\widehat{\lambda}\leq x|M=0)+P(M \geq 1)P(\widehat{\lambda}\leq x|M \geq 1) \nonumber\\
&= p S(x)+ (1-p)H(x),
\end{align}
where $p=P(M=0)= e^{-n\lambda \tau}$ and 
\begin{align*}
S(x)  =  P(\widehat{\lambda}\leq x|M=0)=
\begin{cases}
1  &\mbox{if} \ \ \  x \geq 0, \\
0   &\mbox{if} \ \ \  otherwise,
\end{cases}\\
H(x)  =  P(\widehat{\lambda}\leq x|M \geq 1)=
\begin{cases}
\int_{0}^{x}h(u)du  &\mbox{if} \ \ \  \frac{1}{n\tau}< x < \infty, \\
0   &\mbox{if} \ \ \  otherwise,
\end{cases}
\end{align*}
where $h(u)$ is the PDF of the absolutely continuous part of the CDF of $\widehat{\lambda}$, and it is provided below.

\begin{lemma}
\label{L_3.1}
The  PDF $h(y)$ is given as
\begin{equation*}
h(y) = 
\frac{1}{1-p}\sum_{m=1}^{n}\sum_{j=0}^{m}\binom{n}{m}\binom{m}{j}(-1)^{j}\frac{e^{ -\lambda (n-m+j)\tau}}{y^2} \pi 
\left (\frac{1}{y}-\tau_{j,m} ; m ,m\lambda \right )
\end{equation*}
for\  $\ds \frac{1}{n\tau}< y < \infty$, \ \ $\ds \tau_{j,m} = (n-m+j) \frac{\tau}{m}$, \  and \ $\pi(\cdot)$ is as defined in 
(\ref{prior}).
\end{lemma}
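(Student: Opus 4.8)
The plan is to condition on the number of failures and on which items fail, thereby reducing the whole computation to the law of the total time on test. Fix $m$ with $1\le m\le n$. By exchangeability of the lifetimes, the sub-probability density of the event ``a prescribed set $A$ of $m$ items fails, at times $x_i\in(0,\tau)$ for $i\in A$, while the remaining $n-m$ items survive past $\tau$'' equals $\lambda^m e^{-\lambda\sum_{i\in A}x_i}e^{-\lambda(n-m)\tau}$, and there are $\binom nm$ choices of $A$. On this event $M=m$ and $\widehat{\lambda}=\widehat{\lambda}_m=m/T_m$, where $T_m=\sum_{i\in A}X_i+(n-m)\tau$; hence it suffices to obtain the law of $W:=\sum_{i\in A}X_i$, the sum of $m$ i.i.d.\ variables each with the truncated sub-density $\lambda e^{-\lambda x}\mathbf{1}_{(0,\tau)}(x)$.

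I would then compute the density of $W$ by Laplace transforms. Since $\int_0^\tau e^{-sx}\lambda e^{-\lambda x}\,dx=\frac{\lambda}{\lambda+s}\bigl(1-e^{-(\lambda+s)\tau}\bigr)$, the transform of $W$ is $\bigl(\frac{\lambda}{\lambda+s}\bigr)^m\bigl(1-e^{-(\lambda+s)\tau}\bigr)^m$; expanding the second factor binomially and inverting term by term, using that $\frac{e^{-j(\lambda+s)\tau}}{(\lambda+s)^m}$ inverts to $\frac{(w-j\tau)^{m-1}}{(m-1)!}e^{-\lambda w}\mathbf{1}(w>j\tau)$, gives
\[
f_W(w)=\sum_{j=0}^{m}\binom mj(-1)^j\,\lambda^m\frac{(w-j\tau)^{m-1}}{(m-1)!}\,e^{-\lambda w}\,\mathbf{1}(w>j\tau),\qquad 0<w<m\tau.
\]
(Alternatively one may quote the classical formula for the distribution of the total time on test under Type-I censoring, but the transform argument is self-contained.) Writing $\widehat{\lambda}_m=m/(W+(n-m)\tau)$ and substituting $w=m/y-(n-m)\tau$, with Jacobian $m/y^2$, converts $f_W$ into the contribution of the configurations with $M=m$ to the density of $\widehat{\lambda}$.

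It then remains to assemble the pieces and match the stated form. Multiplying the $m$-th contribution by $\binom nm e^{-\lambda(n-m)\tau}$ (choice of $A$ and survivors) and by $m/y^2$ (Jacobian), summing over $m=1,\dots,n$, and dividing by $P(M\ge 1)=1-p=1-e^{-n\lambda\tau}$ to pass from the sub-density to $h=H'$, one obtains a double sum. The algebraic identities $w-j\tau=m\bigl(\tfrac1y-\tau_{j,m}\bigr)$ with $\tau_{j,m}=(n-m+j)\tau/m$, and $w+(n-m)\tau=m/y$, give $(w-j\tau)^{m-1}=m^{m-1}\bigl(\tfrac1y-\tau_{j,m}\bigr)^{m-1}$ and $e^{-\lambda w}e^{-\lambda(n-m)\tau}=e^{-m\lambda/y}$; collecting the powers of $m$ and $\lambda$ and using $\frac{m^m\lambda^m}{(m-1)!}z^{m-1}e^{-m\lambda z}=\pi(z;m,m\lambda)$ with $z=\tfrac1y-\tau_{j,m}$, after inserting and cancelling the factor $e^{-\lambda(n-m+j)\tau}$, yields exactly the claimed expression, with the indicator $\mathbf{1}(w>j\tau)$ turning into the requirement that the argument of $\pi$ be positive. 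The genuinely substantive step is the computation of the law of the total time on test in the second paragraph; the rest is bookkeeping, and the main thing to watch is keeping straight which quantities are sub-densities and which are genuine conditional densities, and matching the final algebra to the gamma-density form.
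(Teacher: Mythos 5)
Your proposal is correct, but it takes a different route from the paper. The paper's proof is essentially a two-line citation: it notes that, conditionally on $M\geq 1$, the MLE of the mean lifetime $\widehat{\theta}_M = 1/\widehat{\lambda}$ has the sampling density derived by Bartholomew (1963) for Type-I censored exponential data, and then obtains $h(y)$ by the monotone change of variable $\widehat{\lambda}=1/\widehat{\theta}_M$ (Jacobian $1/y^{2}$). You instead re-derive that classical sampling distribution from scratch: decomposing over the $\binom{n}{m}$ possible failure sets, computing the sub-density of the total time on test $W$ by inverting the Laplace transform $\bigl(\tfrac{\lambda}{\lambda+s}\bigr)^{m}\bigl(1-e^{-(\lambda+s)\tau}\bigr)^{m}$ term by term, and then performing the same change of variables $y=m/(w+(n-m)\tau)$ and normalizing by $1-p$. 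Your algebra checks out: $w-j\tau=m(1/y-\tau_{j,m})$, $e^{-\lambda w}e^{-\lambda(n-m)\tau}=e^{-m\lambda/y}$, and reconstituting the gamma kernel $\pi(1/y-\tau_{j,m};m,m\lambda)$ produces exactly the factor $e^{-\lambda(n-m+j)\tau}$ in the statement, with the indicator $\mathbf{1}(w>j\tau)$ absorbed into the positivity of the gamma argument. What your version buys is self-containedness and transparency about the support $(1/(n\tau),\infty)$ and where each factor comes from; what the paper's version buys is brevity, since Bartholomew's result already encodes precisely the convolution-of-truncated-exponentials computation you carry out.
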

\begin{proof}
 For $M\geq 1$, The MLE of the mean lifetime $\theta$ is given by $\widehat{\theta}_{M} = \frac{1}{\widehat{\lambda}}$, whose PDF is obtained by \cite{bartholomew1963sampling}. Therefore, the  PDF of $\widehat{\lambda}$, given $M\geq 1$, is obtained by taking the transformation $\widehat{\lambda} =\frac{1}{\widehat{\theta}_M}$, because $\widehat{\theta}_M>0$.
\end{proof}
\noindent Lemma \ref{L_3.1} is used to compute $P(\widehat{\lambda} \geq \zeta)$ and then we used this probability to derive an 
explicit expression of the Bayes risk of the DSP. The following theorem provides the Bayes risk of the DSP for a quadratic loss function (\ref{new-loss}), for any sampling plan $(n,\tau,\zeta)$ .
\newline
\begin{theorem}
\label{T_3.1}
The Bayes risk for the quadratic loss function (\ref{new-loss}) is given by,
\begin{align}
r(n,\tau,\zeta) &= n(C_{s}-r_{s}) + E(M)r_{s} +\tau C_{\tau}+a_0 +a_1 \mu_1 +a_2 \mu_2 
 \nonumber \\            
& + \sum_{l=0}^2 C_{l}\frac{b^{a}}{\Gamma (a)}\bigg[\frac{\Gamma{(a+l)}}{(b+n\tau)^{(a+l)}} \textit{I}_{(\zeta=0)}+\sum_{m=1}^{n}\sum_{j=0}^m (-1)^{j}\binom{n}{m}\binom{m}{j} \frac{\Gamma{(a+l)}}{(C_{j,m})^{a+l}}I_{S^*_{j,m}}(m,a+l)\bigg]. \nonumber
\end{align}
\end{theorem}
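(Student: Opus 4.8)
The plan is to start from the preliminary expression for the Bayes risk already recorded above, namely
$$r(n,\tau,\zeta)=n(C_s-r_s)+E(M)r_s+\tau C_\tau+a_0+a_1\mu_1+a_2\mu_2+E_\lambda\big\{(C_r-a_0-a_1\lambda-a_2\lambda^2)P(\widehat\lambda\ge\zeta\mid\lambda)\big\},$$
so that only the last expectation needs to be put in closed form. Writing $C_0=C_r-a_0$, $C_1=-a_1$, $C_2=-a_2$, the polynomial $C_r-a_0-a_1\lambda-a_2\lambda^2$ becomes $\sum_{l=0}^{2}C_l\lambda^l$, and the task reduces to evaluating $\sum_{l=0}^{2}C_l\,E_\lambda\{\lambda^l P(\widehat\lambda\ge\zeta\mid\lambda)\}$. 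For this I would use the mixture representation (\ref{mix}): since $\widehat\lambda$ equals $0$ with probability $p=e^{-n\lambda\tau}$ and otherwise has the absolutely continuous density $h$ of Lemma \ref{L_3.1}, supported on $(1/(n\tau),\infty)$, one gets
$$P(\widehat\lambda\ge\zeta\mid\lambda)=p\,I_{(\zeta=0)}+\int_{\zeta^{*}}^{\infty}(1-p)h(u)\,du,\qquad \zeta^{*}=\max\{\zeta,\tfrac{1}{n\tau}\},$$
the first term being the contribution of the atom at $0$ (present only when $\zeta=0$) and the second that of the continuous part.

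Next I would evaluate the two contributions against the gamma$(a,b)$ prior (\ref{prior}). The atom term is immediate: $E_\lambda\{\lambda^l e^{-n\lambda\tau}\}=\frac{b^a}{\Gamma(a)}\frac{\Gamma(a+l)}{(b+n\tau)^{a+l}}$, which produces the $I_{(\zeta=0)}$ term of the theorem. For the continuous part I would substitute $h$ from Lemma \ref{L_3.1} — the factor $1-p$ cancelling against the $1/(1-p)$ in the lemma — and in each summand make the change of variable $y\mapsto 1/y$, which turns $y^{-2}\pi(\tfrac1y-\tau_{j,m};m,m\lambda)$ into the gamma$(m,m\lambda)$ density and hence $\int_{\zeta^{*}}^{\infty}y^{-2}\pi(\tfrac1y-\tau_{j,m};m,m\lambda)\,dy$ into the gamma$(m,m\lambda)$ distribution function at $z:=1/\zeta^{*}-\tau_{j,m}$ (read as $0$ when $z<0$). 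Taking $E_\lambda$ of $\lambda^l e^{-\lambda(n-m+j)\tau}$ times this incomplete-gamma factor, I would interchange the order of integration (legitimate by non-negativity), integrate $\lambda$ out against the prior to obtain $\Gamma(a+l+m)\,(b+(n-m+j)\tau+ms)^{-(a+l+m)}$ up to constants, and finally apply the beta-type substitution $w=ms/(b+(n-m+j)\tau+ms)$ to recognise $\int_0^{z}s^{m-1}(C_{j,m}+ms)^{-(a+l+m)}\,ds$ as a constant multiple of the regularised incomplete beta function, with $C_{j,m}=b+(n-m+j)\tau$ and $S^{*}_{j,m}$ the image of $z$ under $s\mapsto ms/(C_{j,m}+ms)$. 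After the $m^m$, $\Gamma(m)$ and $\Gamma(a+l+m)$ factors collapse, the $(l,m,j)$ term becomes exactly $\frac{b^a}{\Gamma(a)}\frac{\Gamma(a+l)}{C_{j,m}^{a+l}}I_{S^{*}_{j,m}}(m,a+l)$.

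Collecting the atom and continuous contributions, factoring out $C_l$ and $\frac{b^a}{\Gamma(a)}$, and summing over $l=0,1,2$ and over $1\le m\le n$, $0\le j\le m$ then yields the stated formula. I expect the real work to lie entirely in the last paragraph: chaining the substitution $y\mapsto 1/y$, the Fubini interchange, the gamma integration in $\lambda$ and the beta substitution so that all the combinatorial and Gamma-function constants cancel and one lands precisely on $\Gamma(a+l)/C_{j,m}^{a+l}$ times a regularised incomplete beta. A secondary point requiring care is the bookkeeping of supports — $y>1/(n\tau)$ in $h$, and $\tfrac1y>\tau_{j,m}$ inside $\pi$ — together with the split between $\zeta=0$ and $\zeta>0$, which is exactly what forces the truncation $\zeta^{*}=\max\{\zeta,1/(n\tau)\}$ and the indicator $I_{(\zeta=0)}$ in the final expression.
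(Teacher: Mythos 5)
Your proposal is correct and follows essentially the same route as the paper's own proof: split off the atom at $\zeta=0$ via the mixture (\ref{mix}), insert the density of Lemma \ref{L_3.1} (cancelling $1-p$), integrate out $\lambda$ against the gamma prior, and reduce each $(l,m,j)$ term by a beta-type substitution to $\frac{\Gamma(a+l)}{C_{j,m}^{a+l}}I_{S^*_{j,m}}(m,a+l)$. The only differences are cosmetic — you merge the paper's two successive substitutions ($z=mv/C_{j,m}$, then $u=z/(1+z)$) into one, and you truncate at $\zeta^{*}=\max\{\zeta,1/(n\tau)\}$ as the paper does in the hybrid case — neither of which changes the argument.
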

\begin{proof}
See Appendix.
\end{proof}
\noindent Since the expression of the Bayes risk $r(n,\tau,\zeta)$ of the DSP is quite complicated, therefore, the optimal values of $n$, $\tau$ and $\zeta$ cannot be computed analytically. \cite{Lam1994bayesian} has given a discretization method to find an optimal sampling plan. Here we use a similar approach to obtain optimal values of $n$, $\tau$ and $\zeta$,  which minimizes the Bayes risk among all sampling plans.\\
 \noindent \textit{Algorithm for finding the optimum DSP:}
\begin{enumerate}
    \item Fix $n$ and $\tau$; minimize $r(n, \tau, \zeta)$ with respect to $\zeta$ using a grid search method  and denote the minimum Bayes risk 
     by $r(n,\tau,\zeta_0 (n,\tau))$.
    \item For fixed $n$, minimize  $ r(n,\tau,\zeta_0 (n,\tau))$ with respect to $\tau$ using a grid search method and denote the minimum Bayes risk by $r(n, \tau_0 (n),\zeta_0 (n,\tau_0 (n)))$.
    \item Choose the sample size $n_{0}$ such that  $$\ds r(n_0, \tau_0 (n_0), \zeta_0 (n_0,\tau_0 (n_0))) \leq r(n, \tau_0 (n),\zeta_0 (n,\tau_0 (n))) \  \ \forall \ \ \ n \geq 0. $$
\end{enumerate}
We denote the optimum DSP by $(n_0,\tau_0,\zeta_0)$ and the corresponding Bayes risk by $r(n_0,\tau_0,\zeta_0)$.  It is observed that the
optimum DSP is unique, see Section \ref{S_5}. The next theorem proves that the proposed algorithm is finite , i.e., we can find an optimum DSP in a finite number of search steps.
\begin{theorem}
\label{T_3.3}
 Assuming $0<\zeta \leq \zeta^{*}$, let us denote $r(n, \tau, \zeta')
= min_{0< \zeta \leq \zeta^{*}} r(n, \tau, \zeta)$ for some fixed $n \ (\geq 1)$ and $\tau$.  Let $n_0$ and $\tau_0$ be the optimal sample 
size and censoring time, respectively . Then,
\begin{align*}
&n_0 \leq \min \bigg\{ \frac{C_r}{C_s-r_s},\frac{a_0 + a_1\mu_1 + \ldots + a_k \mu_k}{C_s-r_s},\frac{r(n, \tau, \zeta')}{C_s-r_s}\bigg\}, \\
&\tau_0 \leq \min \bigg\{ \frac{C_r}{C_{\tau}},\frac{a_0 + a_1\mu_1 + \ldots + a_k \mu_k}{C_{\tau}},\frac{r(n, \tau, \zeta')}{C_{\tau}}\bigg\}.
\end{align*}
\end{theorem}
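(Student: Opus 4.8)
The plan is to sandwich the minimal Bayes risk $r(n_0,\tau_0,\zeta_0)$ between its sampling/time cost components (a lower bound) and the Bayes risks of a handful of elementary plans (an upper bound), and then to divide. First I would rewrite the Bayes risk of Theorem~\ref{T_3.1} (and its straightforward extension to $g(\lambda)=\sum_{i=0}^{k}a_i\lambda^i$) by merging the term $a_0+a_1\mu_1+\cdots+a_k\mu_k=E_\lambda g(\lambda)$ with the last expectation and using $P(\widehat\lambda<\zeta)=1-P(\widehat\lambda\ge\zeta)$, obtaining
\begin{equation*}
r(n,\tau,\zeta)=n(C_s-r_s)+E(M)\,r_s+\tau C_\tau+E_\lambda\!\left\{g(\lambda)\,P(\widehat\lambda<\zeta)+C_r\,P(\widehat\lambda\ge\zeta)\right\}.
\end{equation*}
Since $C_s>r_s$ and $r_s,C_\tau,C_r\ge0$ and $g\ge0$, all four summands are non-negative, whence $r(n,\tau,\zeta)\ge n(C_s-r_s)$ and $r(n,\tau,\zeta)\ge\tau C_\tau$ for every admissible plan; in particular $r(n_0,\tau_0,\zeta_0)\ge n_0(C_s-r_s)$ and $r(n_0,\tau_0,\zeta_0)\ge\tau_0 C_\tau$.

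For the upper bound I would use that the optimum DSP has the smallest Bayes risk among all admissible plans, so $r(n_0,\tau_0,\zeta_0)$ is at most the Bayes risk of any competitor, and then pick three convenient ones. (i) The plan $(n,\tau,\zeta')$ appearing in the hypothesis is admissible, giving $r(n_0,\tau_0,\zeta_0)\le r(n,\tau,\zeta')$. (ii) The plan with $n=0$ puts no item on test, so $M=0$, $\widehat\lambda=0<\zeta$, action $d_0$ is forced, and $r(0,\tau,\zeta)=\tau C_\tau+\sum_{i=0}^{k}a_i\mu_i$; letting $\tau\downarrow 0$ gives $r(n_0,\tau_0,\zeta_0)\le\sum_{i=0}^{k}a_i\mu_i$. (iii) The plan that rejects the batch outright (no inspection) has Bayes risk $C_r$, so $r(n_0,\tau_0,\zeta_0)\le C_r$. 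Combining with the lower bounds, $n_0(C_s-r_s)\le\min\{C_r,\ \sum_{i=0}^{k}a_i\mu_i,\ r(n,\tau,\zeta')\}$ and likewise $\tau_0 C_\tau\le\min\{C_r,\ \sum_{i=0}^{k}a_i\mu_i,\ r(n,\tau,\zeta')\}$; dividing by $C_s-r_s>0$ and by $C_\tau$ (the $\tau_0$-bound being vacuous when $C_\tau=0$) yields the two displayed inequalities.

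The step I expect to be the main obstacle is (iii): under the decision rule (\ref{dec-func}) with $\zeta>0$ one cannot force $P(\widehat\lambda\ge\zeta)\equiv1$, because $P(M=0)=e^{-n\lambda\tau}>0$ leaves $\widehat\lambda=0<\zeta$ on a positive-probability event, so ``reject the batch outright'' is not literally one of the plans $(n,\tau,\zeta)$. As in the standard acceptance-sampling setup (\cite{Lam1994bayesian}, \cite{lin2002bayesian}), I would resolve this by admitting the two no-experiment decisions — accept the batch (Bayes risk $\sum_{i}a_i\mu_i$) and reject the batch (Bayes risk $C_r$) — into the class over which the optimum is taken, so that the optimum DSP cannot be worse than either. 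Everything else reduces to the two chains of inequalities above, so I anticipate no computational difficulty beyond writing down the general-$k$ version of the Bayes-risk formula, which follows the $k=2$ derivation of Theorem~\ref{T_3.1} verbatim.
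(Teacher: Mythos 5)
Your proposal is correct and follows essentially the same route as the paper: the same rewriting of the Bayes risk, the same lower bound $r(n_0,\tau_0,\zeta_0)\ge n_0(C_s-r_s)+\tau_0 C_\tau$, and the same three competitors (the plan $(n,\tau,\zeta')$, accept-without-sampling with risk $\sum_i a_i\mu_i$, and reject-without-sampling with risk $C_r$), followed by division. The only cosmetic difference is that the paper realizes the two no-experiment plans as the degenerate plans $r(0,0,0)=C_r$ (threshold $\zeta=0$, which forces rejection since $\widehat\lambda\ge 0$ always) and $r(0,0,\infty)=\sum_i a_i\mu_i$, which is exactly the resolution you propose for your step (iii).
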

\begin{proof}
See Appendix.
\end{proof}
 
\section{\sc Bayes Risk and DSP under  Type-I Hybrid Censoring}
\label{S_4}
For  the Type-I hybrid censored sample, \cite{liang2013optimal} derived the Bayes risk of the BSP for a quadratic loss function assuming $g(\lambda)=a_0+a_1\lambda+a_2\lambda^2$, such that $a_0 > 0$, $a_1 > 0$ and $a_2 > 0$. \cite{Chenchou2004} also used the same form 
of the loss function to derive acceptance sampling plans for Type-I hybrid censoring.  Similarly, we also derive the Bayes risk of 
the DSP for a quadratic loss function as follows:
\begin{equation}
L(\delta(\textbf{x}),\lambda,n,\tau)=
\begin{cases}
nC_s - (n-M)r_s +  \tau^{*} C_{\tau} + a_0 +a_1\lambda +a_2\lambda^2  &\mbox{if} \quad \delta(\textbf{x}) = d_{0}, \\
nC_s - (n-M)r_s + \tau^{*} C_{\tau} + C_r         &\mbox{if} \quad \delta(\textbf{x}) = d_{1}.
\end{cases}   \label{new-loss1}
\end{equation}
To derive the Bayes risk for the decision function  (\ref{dec-func})  we define a suitable estimator of $\lambda$ under  Type-I hybrid censoring as:
\begin{equation*}
\widehat{\lambda} =
\begin{cases}
 0 \ \ \ \ \ \ \ \ \ \ \ \ \ \    \mbox{if} \ \ \ M=0 \\
 \widehat{\lambda}_M   \ \ \ \ \ \ \ \ \ \ \  \mbox{if} \ \ \  M >0,
\end{cases}
\end{equation*}
where $\widehat{\lambda}_M$ is the MLE of $\lambda$ given by,
\begin{equation*}
\widehat{\lambda}_M =
\begin{cases}
\frac{M}{\sum_{i=1}^{M} X_{(i)} +(n-M){\tau}} \ \ \ \ \ \mbox{if} \ \ \  M = 1,2 \ldots r-1,  \\
 \frac{r}{\sum_{i=1}^{r} X_{(i)} +(n-r){X_{(r)}}} \ \ \  \mbox{if} \ \ \  M = r.
\end{cases}
\end{equation*}

\noindent Then the Bayes risk is,
\bea
 r(n, r, \tau, \zeta) & = & E\big\{L(\delta(\textbf{x}),\lambda, n, \tau)\big\} \nonumber \\
            & = &   E_{\lambda}E_{X/\lambda}\big\{L(\delta(\textbf{x}),\lambda, n, \tau)\big\} \nonumber \\
            & = & n(C_s-r_s)+ E(M)r_{s} + E(\tau^{*})C_{\tau} + a_0 + a_1 \mu_1 + a_2 \mu_2 \nonumber \\
            &  & \quad + E_{\lambda}\big\{(C_r - a_0 - a_1 \lambda - a_2 \lambda^2 )P(\widehat{\lambda} \geq \zeta)\big\},\nonumber 
\eea
where $\ds \mu_{i}= E(\lambda^i)$ for $i=1,2$. In order to derive an explicit expression of the Bayes risk of the DSP $(n, r, \tau, \zeta)$, we need to compute $P(\widehat{\lambda} \geq \zeta)$.  The distribution of $\widehat{\lambda}$  can be written in a similar form as 
in (\ref{mix}), and the corresponding $h(u)$ is given below.  
\begin{lemma}
\label{L_4.1}
The PDF $h(y)$ is given by 
\begin{align*}
h(y) = 
\frac{1}{1-p}\Bigg[\sum_{m=1}^{r-1}\sum_{j=0}^{m}A_{j,m} &\frac{1}{y^2}\pi 
\left(\frac{1}{y}-\tau_{j,m} ; m ,m\lambda \right) + \frac{1}{y^2}\pi\left(\frac{1}{y}; r,r\lambda\right) \\
 &+r\binom{n}{r}\sum_{k=1}^{r}\binom{r-1}{k-1}(-1)^k \frac{e^{-\lambda(n-r+k)\tau}}{y^2 (n-r+k)}\pi\left(\frac{1}{y}-\tau_{k,r}; r,r\lambda
\right)\Bigg]
\end{align*}
for $\ds \frac{1}{n\tau}< y <\infty$, \ \ $\ds \tau_{j,m} = (n-m+j) \frac{\tau}{m}$,\ \ $\ds A_{j,m}=\binom{n}{m}\binom{m}{j}(-1)^{j}e^{ -\lambda (n-m+j)\tau}$ and $\pi(\cdot)$ as defined in 
(\ref{prior}).
\end{lemma}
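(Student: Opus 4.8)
\emph{Proof proposal.}
The plan is to proceed exactly as in Lemma~\ref{L_3.1} — first pin down the law of the mean‑lifetime estimator $\widehat\theta_M=1/\widehat\lambda$ on $\{M\ge1\}$ and then change variables $y\mapsto 1/y$ — but now the event $\{M\ge1\}$ must be split according to whether the $r$-th failure occurs before the prefixed time $\tau$, since that is what governs the random stopping time $\tau^{*}=\min\{X_{(r)},\tau\}$. (An alternative, equally in the spirit of Lemma~\ref{L_3.1}, would be to quote the exact density of the MLE of $\theta$ under Type‑I hybrid censoring available in the literature and transform; the route below is self‑contained.) Writing $(1-p)H(x)=P(\widehat\lambda\le x,\,M\ge1)$, I would begin from the decomposition
\[
P(\widehat\lambda\le x,\,M\ge1)=\sum_{m=1}^{r-1}P\!\big(\widehat\lambda_M\le x,\,M=m\big)+P\!\big(\widehat\lambda_M\le x,\,X_{(r)}\le\tau\big),
\]
the first sum being the ``incomplete'' regime (the test is stopped at $\tau$ with $1\le M\le r-1$ failures, so $\{M=m\}=\{X_{(m)}\le\tau<X_{(m+1)}\}$) and the last term the ``complete'' regime (the test is stopped at $X_{(r)}\le\tau$, so exactly $r$ failures are seen and $\widehat\lambda_M=r/T_r^{*}$ with $T_r^{*}=\sum_{i=1}^{r}X_{(i)}+(n-r)X_{(r)}$).

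\emph{Incomplete regime.} On $\{M=m\}$ with $1\le m\le r-1$ one has $\widehat\lambda_M=m/\bigl(\sum_{i=1}^{m}X_{(i)}+(n-m)\tau\bigr)$, which is \emph{literally} the Type‑I censored estimator based on $m$ failures among $n$ items censored at $\tau$, and the event $\{M=m\}$ is the same in both schemes. Hence each term $P(\widehat\lambda_M\le x,\,M=m)$ is precisely what was already analysed inside the proof of Lemma~\ref{L_3.1}, so differentiating in $x$ contributes $\sum_{j=0}^{m}A_{j,m}\,y^{-2}\,\pi\!\big(\tfrac{1}{y}-\tau_{j,m};\,m,m\lambda\big)$ to $(1-p)h(y)$. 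Summing over $m=1,\dots,r-1$ reproduces the first group of terms; the only change relative to Lemma~\ref{L_3.1} is that the outer sum now truncates at $r-1$.

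\emph{Complete regime.} Here I would split $P(\widehat\lambda_M\le x,\,X_{(r)}\le\tau)=P(r/T_r^{*}\le x)-P(r/T_r^{*}\le x,\,X_{(r)}>\tau)$. For the first piece, the normalized‑spacings representation of $X_{(1)},\dots,X_{(r)}$ gives $\lambda T_r^{*}\sim\mathrm{Gamma}(r,1)$, so $T_r^{*}/r$ has density $\pi(\cdot;r,r\lambda)$ and the reciprocal transformation yields $y^{-2}\pi(\tfrac{1}{y};r,r\lambda)$, the middle term. For the second piece I would use the joint density of $(X_{(1)},\dots,X_{(r-1)},X_{(r)})$, which on $0<x_1<\dots<x_{r-1}<v$ equals $\tfrac{n!}{(n-r)!}\lambda^{r}e^{-\lambda(\sum_{i<r}x_i+(n-r+1)v)}$; integrating $x_1,\dots,x_{r-1}$ over the simplex (subject to $T_r^{*}\ge r/x$), expanding the resulting factor $(1-e^{-\lambda v})^{r-1}$ by the binomial theorem, integrating $v$ over $(\tau,\infty)$, and differentiating in $x$ should collapse to $r\binom{n}{r}\sum_{k=1}^{r}\binom{r-1}{k-1}(-1)^{k}\,\dfrac{e^{-\lambda(n-r+k)\tau}}{y^{2}(n-r+k)}\,\pi\!\big(\tfrac{1}{y}-\tau_{k,r};\,r,r\lambda\big)$, with the sign $(-1)^{k}$ absorbing the overall minus sign, the prefactor $r\binom{n}{r}=\tfrac{n!}{(r-1)!(n-r)!}$ coming from the $r$-th order‑statistic normalization, and the factor $e^{-\lambda(n-r+k)\tau}$ together with the shift $\tau_{k,r}=(n-r+k)\tau/r$ produced by the lower limit $v=\tau$ acting on the combined exponent $(n-r+1)+(k-1)=n-r+k$. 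Dividing by $1-p$ and collecting the three groups of terms yields the claimed $h(y)$; the support $1/(n\tau)<y<\infty$ follows since $\widehat\lambda_M>1/(n\tau)$ whenever $M\ge1$.

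\emph{Main obstacle.} The delicate step is the correction term $P(r/T_r^{*}\le x,\,X_{(r)}>\tau)$: one must carry the joint law of $\big(T_r^{*},X_{(r)}\big)$ through the constrained integral over $\{X_{(r)}>\tau\}$ so that the binomial expansion telescopes into exactly the stated closed form — the right rates $r\lambda$, the shifts $\tau_{k,r}$, and the factors $1/(n-r+k)$. This is the hybrid‑censoring analogue of the bookkeeping underlying Bartholomew's formula invoked in Lemma~\ref{L_3.1}; by contrast, the incomplete regime is a verbatim transcription of the Type‑I argument.
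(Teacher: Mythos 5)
Your route is genuinely different from the paper's: the paper disposes of Lemma \ref{L_4.1} in one line by quoting the exact density of $\widehat\theta_M=1/\widehat\lambda$ under Type-I hybrid censoring from \cite{childs2003exact} and applying the transformation $y\mapsto 1/y$, exactly as Lemma \ref{L_3.1} quotes \cite{bartholomew1963sampling} (your parenthetical ``alternative'' is in fact the paper's entire proof). Your self-contained skeleton is sound: for $1\le M\le r-1$ the data and the event $\{M=m\}$ coincide with the Type-I case, so those terms are inherited verbatim with the sum truncated at $r-1$; and writing the complete regime as the unconditional Type-II law of $r/T_r^{*}$ minus its restriction to $\{X_{(r)}>\tau\}$ is consistent with the statement, since $\lambda T_r^{*}\sim\mathrm{Gamma}(r,1)$ gives the middle term $y^{-2}\pi(1/y;r,r\lambda)$, and the restricted piece indeed equals $r\binom{n}{r}\sum_{k=1}^{r}\binom{r-1}{k-1}(-1)^{k-1}\frac{e^{-\lambda(n-r+k)\tau}}{n-r+k}\,y^{-2}\pi\big(\tfrac1y-\tau_{k,r};r,r\lambda\big)$ (for $r=1$ this correctly reduces to the density $\lambda e^{-\lambda\theta}$ on $\theta>n\tau$), so the signs, rates and shifts all match.

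The gap is that this restricted piece --- the only part of the lemma not already contained in Lemma \ref{L_3.1} or in the elementary spacings fact --- is asserted (``should collapse to'') rather than derived, and the mechanism you sketch does not go through as stated: with the constraint $T_r^{*}\ge r/x$ in force you cannot integrate $x_1,\dots,x_{r-1}$ out over the simplex to produce the factor $(1-e^{-\lambda v})^{r-1}$; that factor comes from the \emph{unconstrained} integral, while the constraint couples the $x_i$'s with $v$. To close it you need either (i) to condition on $X_{(r)}=v>\tau$, use that given $X_{(r)}=v$ the first $r-1$ order statistics behave as an ordered sample of i.i.d.\ exponentials right-truncated at $v$, whose sum has a density given by an alternating binomial sum of shifted gamma pieces, and then integrate over $v>\tau$; or (ii) the conditional moment generating function/inversion argument of \cite{childs2003exact} --- which is exactly what the paper's one-line proof appeals to. Until that computation is carried out, your proposal establishes the first two groups of terms but not the third.
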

\begin{proof}
For $M\geq 1$, The MLE of the mean lifetime $\theta$ is given by $\widehat{\theta}_{M} = \frac{1}{\widehat{\lambda}}$, whose PDF is obtained by \cite{childs2003exact}.  Hence, the PDF of $\widehat{\lambda}$ for $M\geq 1$ can be easily obtained.
\end{proof}

\begin{theorem}
\label{T_4.1}
The Bayes risk using the quadratic loss function (\ref{new-loss1}) is given by 
\bea
 r(n, r,\tau, \zeta) & = & n(C_{s}-r_{s}) + E(M)r_{s} + E(\tau^{*}) C_{\tau}+a_0 +a_1 \mu_1 +a_2 \mu_2 \nonumber\\
            &  & \quad + \sum_{l=0}^{2}C_{l}\frac{b^a}{\Gamma (a)}\bigg\{\frac{\Gamma{(a+l)}}{(b+n\tau)^{(a+l)}} \textit{I}_{(\zeta=0)}+\sum_{m=1}^{n}\sum_{j=0}^{m}\binom{n}{m}\binom{m}{j}(-1)^j R_{l,j,m} +R_{l,r-n,r}\nonumber \\
            &  & \hspace{2.5cm}+\sum_{k=1}^{r}\binom{n}{r}\binom{r-1}{k-1}(-1)^k \frac{r}{(n-r+k)} R_{l,k,r}\bigg\}, \nonumber
\eea
\end{theorem}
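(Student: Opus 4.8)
The plan is to follow the same route as in the (as yet unseen) proof of Theorem \ref{T_3.1}, the only genuinely new ingredient being the extra summand in $h$ produced by the event $\{M=r\}$ in Lemma \ref{L_4.1}. First I would start from the three–line expression for $r(n,r,\tau,\zeta)$ displayed just above the statement and reduce everything to the single quantity $E_\lambda\{(C_r-a_0-a_1\lambda-a_2\lambda^2)P(\widehat\lambda\ge\zeta)\}$. Writing the acceptance–cost contrast as $C_r-a_0-a_1\lambda-a_2\lambda^2=\sum_{l=0}^{2}C_l\lambda^l$ with $C_0=C_r-a_0$, $C_1=-a_1$, $C_2=-a_2$, it then suffices to compute $E_\lambda\{\lambda^l\,P(\widehat\lambda\ge\zeta)\}$ for $l=0,1,2$, attach the weights $C_l$, and restore the fixed part $n(C_s-r_s)+E(M)r_s+E(\tau^{*})C_\tau+a_0+a_1\mu_1+a_2\mu_2$.

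Next I would invoke the mixture representation of the law of $\widehat\lambda$, analogous to (\ref{mix}): $\widehat\lambda$ puts mass $p=e^{-n\lambda\tau}$ at $0$ (the event $M=0$, on which $\tau^{*}=\tau$) and, conditionally on $\{M\ge1\}$, is absolutely continuous on $(\tfrac1{n\tau},\infty)$ with density $h$ from Lemma \ref{L_4.1}. Hence $P(\widehat\lambda\ge\zeta)=p\,I_{(\zeta=0)}+(1-p)\int_{\zeta}^{\infty}h(y)\,dy$, where the integral is understood over $(\max(\zeta,\tfrac1{n\tau}),\infty)$; this handles the ranges $\zeta=0$, $0<\zeta\le\tfrac1{n\tau}$ and $\zeta>\tfrac1{n\tau}$ uniformly. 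Integrating the atom part against the Gamma$(a,b)$ prior gives $E_\lambda\{\lambda^l p\}=E_\lambda\{\lambda^l e^{-n\lambda\tau}\}=\frac{b^a}{\Gamma(a)}\,\frac{\Gamma(a+l)}{(b+n\tau)^{a+l}}$, which is precisely the $I_{(\zeta=0)}$ term in the statement.

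For the continuous part I would substitute $(1-p)h(y)$ from Lemma \ref{L_4.1}. Since $(1-p)\int h(y)\,dy=1-p$ and $E_\lambda\{\lambda^l(1-p)\}<\infty$, the double integral of the absolute value is finite, so Fubini's theorem lets me interchange the $\lambda$– and $y$–integrals and then split $(1-p)h$ into its signed pieces. In each piece the change of variable $u=1/y$ turns $\int\frac1{y^2}\,\pi\!\left(\tfrac1y-\tau_{\bullet};m,m\lambda\right)dy$ into the integral of a Gamma$(m,m\lambda)$ density over a bounded interval, i.e. a difference of incomplete gamma values; then integrating $\lambda^{a+l-1}e^{-\lambda b}$ times the leftover factor $e^{-\lambda(n-m+j)\tau}$ against this (interchanging once more and doing the $\lambda$–integral explicitly) produces a $\Gamma(a+l)$ factor together with an incomplete–gamma/beta factor — exactly the data packaged into the quantities $R_{l,j,m}$. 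The middle term $\pi(\tfrac1y;r,r\lambda)$ and the mixed term $\pi(\tfrac1y-\tau_{k,r};r,r\lambda)$ are handled by the same computation and give $R_{l,r-n,r}$ (the index $r-n$ being chosen so that $\tau_{r-n,r}=0$ and $n-r+(r-n)=0$) and $R_{l,k,r}$; carrying the combinatorial coefficients $A_{j,m}=\binom nm\binom mj(-1)^j$ and $\binom nr\binom{r-1}{k-1}(-1)^k\frac{r}{n-r+k}$ through unchanged reproduces the three sums in the statement.

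The main obstacle is not mathematics but bookkeeping of the integration limits. After $u=1/y$ one is integrating a Gamma$(m,m\lambda)$ density over the intersection of $(\tau_{j,m},\infty)$ (the support of $\pi(\tfrac1y-\tau_{j,m};m,m\lambda)$) with $\big(0,\min(1/\zeta,n\tau)\big)$, and one must verify, for every admissible $(j,m)$ and for $(k,r)$, that this interval is the one encoded in the endpoints of $R_{l,\cdot,\cdot}$; this uses the explicit forms $\tau_{j,m}=(n-m+j)\tfrac\tau m$ and $\tau_{k,r}=(n-r+k)\tfrac\tau r$ and the lower endpoint $\tfrac1{n\tau}$ of $H$. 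It is the only place where the Type-I hybrid structure differs materially from the Type-I case of Theorem \ref{T_3.1}; modulo those checks, summing over $l=0,1,2$ with weights $C_l$ and adding the fixed part completes the proof.
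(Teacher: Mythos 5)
Your proposal is correct and follows essentially the same route as the paper's proof: reduce the risk to $\sum_{l}C_l\,E_\lambda\{\lambda^l P(\widehat\lambda\ge\zeta)\}$, handle the atom at $0$ via the prior to get the $I_{(\zeta=0)}$ term, and integrate the three pieces of the density from Lemma \ref{L_4.1} (with the effective lower limit $\max\{\zeta,\tfrac1{n\tau}\}$, the paper's $\zeta^{*}$) through the reciprocal substitution into the quantities $R_{l,j,m}$, $R_{l,r-n,r}$ and $R_{l,k,r}$. The only cosmetic difference is the order of integration (the paper does the $\lambda$-integral first, then the change of variables leading to an incomplete beta), which does not affect the argument.
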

\begin{proof}
See Appendix.
\end{proof}
\noindent As in the case of Type-I censoring, the expression of the Bayes risk $r(n,r,\tau,\zeta)$ of the DSP is also quite complicated,
and optimal values of $n$, $r$, $\tau$ and $\zeta$ cannot be computed analytically. In the following steps, an alternative algorithm (see \cite{Lam1994bayesian}) is considered to obtain optimal values of $n$, $r$, $\tau$ and $\zeta$ which minimize the Bayes risk among all sampling plans.
\\
\\
 \noindent \textit{Algorithm for finding optimum DSP:}
\\
To find the optimal values of $n$, $r$, $\tau$ and $\zeta$ based on the Bayes risk, a simple algorithm is described in the following steps:
\begin{enumerate}
    \item Fix $n$, $r$ and $\tau$; minimize $r(n, r,\tau, \zeta)$ with respect to $\zeta$ using a grid search method  and denote the minimum Bayes risk 
     by $r(n,r,\tau,\zeta_0 (n,r,\tau))$.
    \item For fixed $n$ and $r$, minimize  $ r(n,r,\tau,\xi_0 (n, r,\tau))$ with respect to $\tau$ using a grid search method and denote the minimum Bayes risk by $r(n, r, \tau_0 (n,r),\zeta(n,r,\tau_0(n,r)))$.
    \item For fixed $n$, choose $r\leq n$ for which $r(n, r, \tau_0 (n,r),\zeta(n,r,\tau_0(n,r)))$ is minimum and denote it by $r(n, r_0(n), \tau_0 (n,r_0(n)),\zeta(n,r_0(n_0),\tau_0(n,r_0(n))))$.
    
    \item Choose the sample size $n_{0}$ such that  
    \begin{align*}
        r(n_0,r_0(n_0), \tau_0 (n_0,r_0(n_0)), &\zeta_0 (n_0,r_0(n_0),\tau_0 (n_0,r_0(n_0)))) \\
        &\leq r(n, r_0(n), \tau_0 (n,r_0(n)),\zeta_0 (n,r_0(n),\tau_0 (n,r_0(n)))) \  \ \forall \ \ \ n \geq 0.
    \end{align*}
\end{enumerate}
We denote the optimum DSP by $(n_0, r_0,\tau_0,\zeta_0)$ and the minimum Bayes risk by $r(n_0, r_0,\tau_0,\zeta_0)$. In this case  it is also observed that the optimum DSP is unique, see in Section \ref{S_5}.
\par
 It is difficult to find the optimal $\tau_0$ analytically because the Bayes risk expression is complicated. \cite{Tsai2014} suggested a numerical approach to choose a suitable range of $\tau$, say $[0, \tau_\alpha]$ where $\tau_\alpha$ is such that $P(0<X<\tau_{\alpha})=1-\alpha$ and $\alpha$ is a preassigned number satisfying $0<\alpha<1$. The choice of $\alpha$ depends on the prescribed precision. The higher the precision required, the smaller the value of $\alpha$ should be. They suggested the value of $\alpha=0.01$. The next theorem establishes that the proposed algorithm stops in a finite number of steps.
\begin{theorem}
\label{T_4.3}
 Assuming $0<\zeta \leq \zeta^{*}$, let us denote $r(n, r,\tau, \zeta')
= min_{ 0<\zeta \leq \zeta^{*}} r(n,r,\tau, \zeta)$ for some fixed $n \ (\geq 1)$ and $\tau$. Let $n_0$ be the optimal sample 
size. Then,
\begin{align*}
&n_0 \leq \min \bigg\{ \frac{C_r}{C_s-r_s},\frac{a_0 + a_1\mu_1 + \ldots + a_k \mu_k}{C_s-r_s},\frac{r(n, r,\tau, \zeta')}{C_s-r_s}\bigg\}
\end{align*}
and $r_{0}\leq n_{0}$.
\end{theorem}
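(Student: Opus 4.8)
To establish Theorem \ref{T_4.3} I would follow the same route as for Theorem \ref{T_3.3}. The first step is to re-express the Bayes risk so that each of its summands is manifestly non-negative. With $g(\lambda)=a_0+a_1\lambda+\cdots+a_k\lambda^k$ (so that $k=2$ recovers the loss function (\ref{new-loss1}) used in Theorem \ref{T_4.1}), substituting the decision rule (\ref{dec-func}) into the loss (\ref{new-loss1}) and taking expectations gives
\begin{align*}
r(n,r,\tau,\zeta) &= n(C_s-r_s)+r_s E(M)+C_\tau E(\tau^{*})\\
&\quad+E_\lambda\!\left\{g(\lambda)\,P\!\left(\widehat{\lambda}<\zeta\mid\lambda\right)\right\}+C_r\,E_\lambda\!\left\{P\!\left(\widehat{\lambda}\geq\zeta\mid\lambda\right)\right\}.
\end{align*}
Since $C_s>r_s\geq 0$, $C_\tau\geq 0$, $C_r\geq 0$, $a_0,\ldots,a_k>0$, $M\geq 0$, $\tau^{*}\geq 0$, and the two conditional probabilities lie in $[0,1]$, every term on the right is non-negative; in particular $r(n,r,\tau,\zeta)\geq n(C_s-r_s)$ for every sampling plan.

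The second step is to bound the optimal Bayes risk $r(n_0,r_0,\tau_0,\zeta_0)$ from above by comparing it with three concrete plans. Taking $n=0$ (no sampling) and rejecting the batch gives Bayes risk exactly $C_r$, because then $M\equiv 0$ and $\tau^{*}\equiv 0$ so the sampling, salvage and time-on-test costs all vanish; taking $n=0$ and accepting gives Bayes risk exactly $E_\lambda\{g(\lambda)\}=a_0+a_1\mu_1+\cdots+a_k\mu_k$; and the plan $(n,r,\tau,\zeta')$ appearing in the statement has Bayes risk $r(n,r,\tau,\zeta')$ by the definition of $\zeta'$. Since $(n_0,r_0,\tau_0,\zeta_0)$ minimises the Bayes risk over all sampling plans,
\[
r(n_0,r_0,\tau_0,\zeta_0)\leq\min\!\left\{C_r,\;a_0+a_1\mu_1+\cdots+a_k\mu_k,\;r(n,r,\tau,\zeta')\right\}.
\]

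Combining the two steps, applied to the optimal plan, gives $n_0(C_s-r_s)\leq r(n_0,r_0,\tau_0,\zeta_0)\leq\min\{C_r,\,a_0+a_1\mu_1+\cdots+a_k\mu_k,\,r(n,r,\tau,\zeta')\}$, and dividing by $C_s-r_s>0$ yields the claimed bound on $n_0$. The inequality $r_0\leq n_0$ needs nothing further: a Type-I hybrid censoring scheme waits for at most the $r$th of $n$ failures, so $r>n$ is meaningless and Step~3 of the algorithm already restricts the search to $r\leq n$; hence $r_0=r_0(n_0)\leq n_0$.

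The points that need a little care are the verification that the regrouped Bayes risk above really does have all-non-negative summands --- this amounts to replacing the combination $C_r-g(\lambda)$ that appears in Theorem \ref{T_4.1} by the two non-negative pieces $g(\lambda)P(\widehat{\lambda}<\zeta)$ and $C_rP(\widehat{\lambda}\geq\zeta)$ --- and the treatment of the degenerate $n=0$ comparison plans, where one uses only that with no units on test $M\equiv 0$ and $\tau^{*}\equiv 0$. Everything else is exactly the bookkeeping used in the proof of Theorem \ref{T_3.3}.
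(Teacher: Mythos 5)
Your proof is correct and follows essentially the same route as the paper, which proves Theorem \ref{T_4.3} by declaring it ``similar to Theorem \ref{T_3.3}'': you use the same non-negative decomposition of the Bayes risk to get $r(n_0,r_0,\tau_0,\zeta_0)\geq n_0(C_s-r_s)$, the same three comparison plans ($\zeta=0$ reject, $\zeta=\infty$ accept, and the plan with $\zeta'$) for the upper bound, and the trivial scheme constraint for $r_0\leq n_0$. Your adaptation correctly accounts for the hybrid-censoring feature that the time cost enters through $E(\tau^{*})$ rather than $\tau$, which is exactly why only the bound on $n_0$ (and not on $\tau_0$) is claimed here.
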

\begin{proof}
Proof is similar to Theorem \ref{T_3.3}.
\end{proof}

\section{\sc Higher Degree Polynomial and Non Polynomial Loss Functions}
\label{S_6}
In this section, we establish that for a higher degree polynomial loss function or for a non polynomial loss function, the 
implementation of the proposed DSP is much easier compared to the BSP. 

\subsection{\sc Higher Degree Polynomial Loss Function}

In Section \ref{S_3} and \ref{S_4} we consider the quadratic loss function as an approximation of the true loss function. In this section we consider a higher degree polynomial loss function, i.e., the cost of acceptance in the loss function (\ref{loss-func}) is 
$g(\lambda)=a_0+a_1\lambda+\ldots+a_k\lambda^k$.  Based on the discussions in Section \ref{S_4}, it is observed that for $k\geq 5$, the implementation of the proposed DSP under the  Type-I hybrid censoring is straightforward as compared to the BSP.  The Bayes risk for the DSP under  Type-I hybrid censoring for a $k^{th}$ degree polynomial loss function is given by 
\bea
r(n,r,\tau,\zeta) &= & n(C_{s}-r_{s}) + E(M)r_{s} +E(\tau^*) C_{\tau}+a_0 +a_1 \mu_1 +a_2 \mu_2 + \ldots + a_k \mu_k
\nonumber\\
            &  & \quad + \sum_{l=0}^{k}C_{l}\frac{b^a}{\Gamma (a)}\bigg\{\frac{\Gamma{(a+l)}}{(b+n\tau)^{(a+l)}} \textit{I}_{(\zeta=0)}+\sum_{m=1}^{n}\sum_{j=0}^{m}\binom{n}{m}\binom{m}{j}(-1)^j R_{l,j,m} +R_{l,r-n,r}\nonumber \\
            &  & \hspace{2.5cm}+\sum_{k=1}^{r}\binom{n}{r}\binom{r-1}{k-1}(-1)^k \frac{r}{(n-r+k)} R_{l,k,r}\bigg\},
\eea
\noindent where $ E(M)$ and $E(\tau^*)$ are defined earlier. Thus, for any value of $k$, obtaining the Bayes risk is straightforward and the form of the decision function is same for any value of $k$.
\par
\noindent Now in case of the BSP, the Bayes decision rule (see \cite{liang2013optimal}) is given by 
\begin{equation*}
\delta_B(\textbf{x}) =
\begin{cases}
1, & \mbox{if} \ \ \phi_{\pi}\big(m,z\big) \leq C_{r}\\ 
0, &  \ \  \ \  \ \ otherwise,
\end{cases}
\end{equation*}
where, for Type-I censoring
$$
z=\sum_{i=1}^{m} x_{i} + (n-m)\tau,
$$
and for Type-I hybrid censoring
$$
z =
\begin{cases}
\sum_{i=1}^{m} x_{i} + (n-m)\tau & \mbox{if} \ \ \ \ m=1,2 \ldots r-1 \\
\sum_{i=1}^{r} x_{i} + (n-r)x_{r} & \mbox{if}\ \ \ \  m=r, \\
\end{cases}
$$
with
$$
\phi_{\pi}\big(m,z\big) =\int_{0}^{\infty}g(\lambda)\pi\big(\lambda|m,z\big)d\lambda.
$$

\noindent Since the prior distribution of $\lambda$ is gamma $(a,b)$, it is well known that the posterior distribution of $\lambda$ is also gamma, viz.,  
\begin{equation*}
\pi\big(\lambda|m,z \big) \sim \hbox{gamma} (m+a,z+b).
\end{equation*}
Now when $g(\lambda)= a_0+a_1\lambda+\ldots+a_k\lambda^k$ in ($\ref{loss-func}$) then, 
\begin{align*}
    \phi_{\pi}\big(m, z\big) = \int_{0}^{\infty}g(\lambda)\pi\big(\lambda|m,z\big)d\lambda = a_{0} +\sum_{j=1}^{k}a_{j}\frac{(m+a)\ldots(m+a+j-1)}{(z+b)^j}.
\end{align*}
Thus, to find the closed form of the decision function we need to obtain the set
$$
A = \{z;\ z \ge 0, \phi_{\pi}\big(m, z \big) \leq C_{r}\},
$$
and to construct the set $A$, we need to obtain the set of $z \ge 0$, such that   
\begin{equation}
h_1(z) = a_{0} +\sum_{j=1}^{k}a_{j}\frac{(m+a)\ldots(m+a+j-1)}{(z+b)^j} \leq C_{r},   \label{hdp1}
\end{equation}
which is equivalent to find $z \ge 0$, such that,
\begin{eqnarray}
h_2(z) & = & (C_{r}-a_{0}) \big(z+b\big)^k-\sum_{j=1}^{k}a_{j}(m+a)\ldots(m+a+j-1)\big(z+b\big)^{k-j}  \geq 0.   \label{hdp-21}
\end{eqnarray}
It can be easily shown that if $D_n(m)$ is the only real root or $D_n(m)$ is the maximum real root of $h_2(z) = 0$ then the Bayes decision function will take the following form.
\begin{equation}
\delta_B(\textbf{x}) =
\begin{cases}
1, & \mbox{if} \ \  z \geq a(n,r,\tau,m)\\ 
0, &  \ \  \ \  \ \ otherwise,
\end{cases}
\label{hdp-df1}
\end{equation}
where \ 
$a(n,r,\tau,0)=0\vee(D_n(0)-b)$ and  $a(n,r,\tau,m)=0\vee(D_n(m)-b)\wedge n\tau \ \ \forall \  1\leq m\leq r$.
However, it is not straightforward to find the real root when $k\geq5$. It is well known that there is no algebraic solution to polynomial equations of degree five or higher (see chapter 5,  \cite{Herstein1975}). So the BSP cannot be obtained for fifth or higher degree polynomial loss function analytically.  Even, finding the optimal sampling plan numerically becomes very difficult.

\subsection{\sc Non-Polynomial Loss Function}
We have already discussed in Section \ref{S_2} that the loss due to accepting the batch $g(\lambda)$ can vary and the true form of the loss function is likely to be unknown.  When we have a non-polynomial loss function, we show that implementation of the proposed DSP is quite easy and the associated Bayes risk is computed without any additional effort as compared to the BSP. To illustrate this, we use the following non polynomial loss function: 
\begin{equation}
L(\delta(\textbf{x}),\lambda,n,r,\tau)=
\begin{cases}
nC_s -(n-M)r_{s}+ \tau^{*} C_{\tau} + a_0+a_1\lambda+a_2\lambda^{5/2} &\mbox{if} \ \delta(\textbf{x}) = d_{0}, \\
nC_s -(n-M)r_{s}+ \tau^{*} C_{\tau} + C_r         &\mbox{if} \ \delta(\textbf{x}) = d_{1},
\end{cases}   \label{nploss-func}
\end{equation}
\noindent where $g(\lambda)= a_0+a_1\lambda+a_2\lambda^{5/2}$ is an increasing function in $\lambda$. Here we consider only  the 
Type-I hybrid censoring case.  The Bayes risk of the DSP for the Type-I hybrid censoring is as follows: 
\bea
r(n,r,\tau,\zeta) &= & n(C_{s}-r_{s}) + E(M)r_{s} +E(\tau^*) C_{\tau}+a_0 +a_1 \mu_1 +a_2 \frac{\Gamma{(a+\frac{5}{2}})}{\Gamma{(a)}b^{\frac{5}{2}}} 
\nonumber\\
            &  & \quad + \sum_{l=0}^{2}C_{l}\frac{b^a}{\Gamma (a)}\bigg\{\frac{\Gamma{(a+p_l)}}{(b+n\tau)^{(a+p_l)}} \textit{I}_{(\zeta=0)}+\sum_{m=1}^{n}\sum_{j=0}^{m}\binom{n}{m}\binom{m}{j}(-1)^j R_{p_l,j,m} \nonumber \\
            &  & \hspace{2.5cm}+ R_{p_l,r-n,r} +\sum_{k=1}^{r}\binom{n}{r}\binom{r-1}{k-1}(-1)^k \frac{r}{(n-r+k)} R_{p_l,k,r}\bigg\}, \nonumber
\eea
\noindent where $E(M)$ and $E(\tau^*)$ are defined earlier and 
\begin{equation*}
p_l=
\begin{cases}
0, & \mbox{if} \ \ \ \ l=0 \\
1, & \mbox{if} \ \ \ \ l=1\\
\frac{5}{2}, & \mbox{if} \ \ \ \ l=2.
\end{cases}
\end{equation*}
To express the Bayes decision function of the BSP (see \cite{liang2013optimal}) in a simpler form 
for the non-polynomial loss function $g(\lambda)=a_0+a_1\lambda+a_2\lambda^{5/2}$, 
we have to consider,
\begin{align*}
    \phi_{\pi}\big(m, y(n,r,\tau,m)\big) & = \int_{0}^{\infty}g(\lambda)\pi\big(\lambda|m,y(n,r,\tau,m)\big)d\lambda\\
    &= a_{0} + \frac{a_{1}(m+a)}{(y(n,r,\tau,m)+b)}+ \frac{a_{2}\Gamma(m+a+\frac{5}{2})}{\Gamma(m+a)(y(n,r,\tau,m)+b)^{\frac{5}{2}}}.
\end{align*}
So to find a closed form of the decision function we need to obtain the set 
$$
A = \{x;\ x \ge 0, \phi_{\pi}\big(m, x \big) \leq C_{r}\}.
$$
Note that to construct the set $A$, we need to obtain the set of $x \ge 0$ such that
\begin{equation}
h_1(x) = a_{0} + \frac{a_{1}(m+a)}{(x+b)}+ \frac{a_{2}\Gamma(m+a+\frac{5}{2})}{\Gamma(m+a)(x+b)^{\frac{5}{2}}} \leq C_{r},   \label{hdnp} \nonumber
\end{equation}
and this is equivalent to find $x \ge 0$ such that  
\begin{eqnarray}
h_2(x) = (C_{r}-a_{0})\Gamma(m+a) \big(x+b\big)^{\frac{5}{2}}-a_{1}(m+a)\Gamma(m+a)\big(x+b\big)^{\frac{3}{2}}-a_{2}\Gamma(m+a+\frac{5}{2})\geq 0.   \label{hdnp-2} \nonumber 
\end{eqnarray}
It is obvious that we cannot obtain a closed form solution of the non polynomial equation $h_2(x)=0$. So in  case of a general 
non-polynomial loss function, we cannot construct a closed form of the Bayes decision function and obtain the explicit expression of Bayes risk. But since our decision function does not depend on the form of the loss function, this difficulty does not arise in case of the
proposed DSP.

\section{\sc Numerical Results and Discussion}
\label{S_5}
To obtain the numerical results, we consider the algorithms proposed in Sections \ref{S_3} and \ref{S_4} for Type-I and  Type-I hybrid censoring, respectively.  Let us assume that $n_{1}^{*}$ and $n_{2}^{*}$ from Theorem \ref{T_3.3} and \ref{T_4.3} denote the upper bound of $n_0$ under Type-I censoring and Type-I hybrid censoring, respectively. Then, for Type-I censoring
$$
 0\leq n_0 \leq n_{1}^{*} \ \ \ and \ \ \  0 \leq \tau_0 \leq \tau', 
$$
 and for Type-I hybrid censoring
$$
 0\leq n_0 \leq n_{2}^{*} \ \ \ and \ \ \  0 \leq \tau_0 \leq \tau_{\alpha},
$$
 where $\tau'$ and $\tau_{\alpha}$ are upper bounds of $\tau$ under Type-I and Type-I hybrid censoring. For fixed $n$ and $\tau$ in Type-I censoring and for fixed $n$, $r \ (\leq n)$  and $\tau$ in Type-I hybrid censoring, we minimize the Bayes risk  with respect to $\zeta$ using a grid search method where the grid size of $\zeta$ is taken as $0.0125$. Then, we minimize with respect to $\tau$ where grid size of $\tau$ is taken as $0.0125$. Finally, we choose the value of $n$ in Type-I censoring and the value of $n$ and $r \ (\leq n)$ in Type-I hybrid censoring for which the Bayes risk is minimum.
 
\subsection{\sc Comparison with Lam (1994), Lin \textit{et al.} (2010) and BSP sampling plans}

 In this section, we focus on comparing the optimum DSP with \cite{Lam1994bayesian}, \cite{lin2010co} and BSP sampling plans. For Type-I censoring, comparison with \cite{Lam1994bayesian} and \cite{lin2010co} sampling plans the values of coefficients $\ a_0=2,\ a_1=2,\ a_2=2,\ C_s=0.5,\ C_{\tau}=0$, $r_{s}=0$ and $C_r=30$ are used. In Table \ref{table-1com} only hyper-parameters $a$ and $b$ are varying and others are kept fixed.
 
\begin{table}[H]
\caption{Numerical comparison with Lam (1994) and Lin \textit{et al.} (2010) sampling plans for different values of $a$ and $b.$}
\label{table-1com}
		\resizebox{17cm}{0.2\textheight}{
		\begin{tabular}{|c|cc|cccc|cc|cccc|}\hline
	Scheme &  $a$ & $b$ & $r( n_0, \tau_0, \zeta_0(\xi_0))$ & $n_0$ & $\tau_0$ & $\zeta_0(\xi_0)$  &  $a$ & $b$ & $r( n_0, \tau_0, \zeta_0)$ & $n_0$ & $\tau_0$ & $\zeta_0(\xi_0)$ \\ \hline
	DSP & 2.5 & 0.8 & 24.8419 & 4 & 1.3125 & 3.0475 & 1.5 & 0.8 & 16.5825 & 3 & 0.7000 & 4.2862 \\ 
	LAM &     &     & 24.9367 & 3 & 0.7077 & 0.3539 &     &     & 16.6233 & 3 & 0.5262 & 0.2631 \\
    Lin et al.(2010) &     &     & 24.9893 & 4 & 0.6808 & 0.3404 &  &  & 16.7533 & 3 & 0.5262 & 0.2631  \\ \hline
	DSP & 2.5 & 1.0 & 21.7081 & 4 & 1.1125 & 3.5950 & 2.0 & 0.8 & 21.1398 & 4 & 1.1625 & 3.4500 \\ 
	LAM  &     &     & 21.7640 & 3 & 0.5483 & 0.2742  &     &     & 21.2153 & 3 & 0.6051 & 0.3026 \\
    Lin et al.(2010)  &     &     & 21.8515 & 4 & 0.5819 & 0.2910 &  &  & 21.2875 & 4 & 0.6051 & 0.3026  \\  \hline
	DSP & 3.0 & 0.8 & 27.5581 & 3 & 1.1625 & 2.5875 & 2.5 & 0.6 & 27.7267 & 3 & 1.2125 &  2.4863 \\ 
	LAM &     &     & 27.6136 & 3 & 0.8170 & 0.4085  &     &     & 27.7834 & 3 & 0.8537 & 0.4268  \\ 
	Lin et al.(2010)  &     &     & 27.6521 & 3 & 0.8170 & 0.4085 &  &  & 29.8193 & 3 & 0.8537 & 0.4268 \\ \hline
	DSP & 3.5 & 0.8 & 29.2789 & 2 & 1.0125 & 1.9875 & 10.0 & 3.0 & 29.5166 & 2 & 0.8000 & 2.5187 \\
	LAM &     &     & 29.2789 & 2 & 1.0037 & 0.5019  &     &     & 29.5166 & 2 & 0.7928 & 0.3964  \\
	Lin et al.(2010)  &     &     & 29.3642 & 2 & 1.0037 & 0.5019 &     &     & 29.5959 & 2 & 0.8194 & 0.4097   \\ \hline
		\end{tabular}
  }
\end{table}
\vspace{-1em}
From Table \ref{table-1com} it is clear that Bayes risk of the optimum DSP is less then or equal to the Bayes risk of \cite{Lam1994bayesian} and \cite{lin2010co} sampling plans.

\noindent For Type-I censoring to compare with the BSP proposed by \cite{lin2002bayesian} we use set of coefficient $ a_0=2, a_1=2, a_2=2, C_s=0.5 , C_\tau=0.5, C_r=30$, the prior parameters $a=2.5, b=0.8$ and assume $\zeta^{*}=6$. We obtain the minimum Bayes risk and decision theoretic sampling plan (DSP) for the proposed method and the Bayes risk of the BSP by varying $a$, $b$, $C_s$, $C_\tau$ and $C_r$ one at a time and keeping other fixed. The results are presented in Table $\ref{T1-HT1}$.
\begin{table}[H]
\caption{Numerical Comparison between DSP and BSP for Type-I censoring and Hybrid Type-I censoring.}
\vspace{0.5em}
\resizebox{17cm}{0.3\textheight}{
		\begin{tabular}{|cc|c|cccc|cc|c|ccccc|}\hline
		\multicolumn{1}{|c}{} &
	     \multicolumn{6}{c|}{Type-I censoring} & \multicolumn{1}{c}{} &  \multicolumn{7}{c|}{Hybrid Type-I censoring } \\ \hline
	   \multicolumn{1}{|c}{$a$} &
		\multicolumn{1}{c|}{$b$} &
		\multicolumn{1}{c}{BSP} &
	    \multicolumn{4}{|c|}{DSP} & 
	    \multicolumn{1}{c}{$a$} &
		\multicolumn{1}{c|}{$b$} &
		\multicolumn{1}{c|}{BSP\footnotemark} &
		\multicolumn{5}{c|}{DSP} \\ \cline{3-7} \cline{10-15}
	   & & $r( n_B, \tau_B, \delta_B)$ & $r( n_0, \tau_0, \zeta_0)$ & $n_0$ & $\tau_0$ & $\zeta_0$ & & & $r( n_B, \tau_B, \delta_B)$ & $r( n_0, r_0, \tau_0, \zeta_0)$ & $n_0$ & $r_0$ & $\tau_0$ & $\zeta_0$  \\ \hline
	   2.5 & 0.8 &25.2777 & 25.2777 & 3 & 0.7250 & 2.9750 &  2.5 & 0.8 & 26.0319 & 26.0338 & 6 & 3 & 0.2000 & 2.9750 \\ \hline
	   	2.5 & 1.0 & 22.0361 & 22.0361 & 3 & 0.5625 & 3.7250 & 2.5 & 1.0 & 22.6430 & 22.6437 & 5 & 3 & 0.1875 & 3.7200 \\ \hline
	   	3.5 & 0.8 & 29.7131 & 29.7131 & 2 & 0.8125 & 1.9875 &  3.0 & 0.8 & 28.7885 & 28.7889 & 4 & 2 & 0.2375 & 2.3445   \\ \hline
	    \multicolumn{2}{|c}{$C_s$} &
	     \multicolumn{5}{|c|}{} & \multicolumn{2}{c}{$C_s$} &  \multicolumn{6}{|c|}{} \\ \hline
		\multicolumn{2}{|c|}{0.50} & 25.2777 & 25.2777 & 3  & 0.7250 & 2.9750  & \multicolumn{2}{c|}{0.30} & 24.3326 & 24.3341 & 10 & 4 & 0.1500 & 3.0500 \\ \hline
	    \multicolumn{2}{|c|}{1.00} & 26.5396 & 26.5396 & 2  & 0.5875 & 2.8625 &  \multicolumn{2}{c|}{0.50} & 26.0319 & 26.0338 & 6  & 3 & 0.2000 & 2.9750 \\ \hline
		\multicolumn{2}{|c|}{2.00} & 27.9542  & 27.9542 & 1  & 0.3750 & 2.6750 & 	\multicolumn{2}{c|}{0.70} &  26.9106 & 26.9114 & 3  & 2 & 0.2750 & 2.8625 \\ \hline
	    \multicolumn{2}{|c}{$C_\tau$} &
	    \multicolumn{5}{|c|}{} & \multicolumn{2}{c}{$C_\tau$} &  \multicolumn{6}{|c|}{} \\ \hline
		\multicolumn{2}{|c|}{0.50} & 25.2777 & 25.2777 & 3  & 0.7250 & 2.9750 & \multicolumn{2}{c|}{0 }& 24.6354 & 24.6754 & 4 & 4 & 0.8750 & 3.0500  \\ \hline
	    \multicolumn{2}{|c|}{1.00} &  25.6238 & 25.6238 & 3  & 0.6625 & 2.9750  & \multicolumn{2}{c|}{8} & 26.4662 & 26.4672 & 7  & 3 & 0.1625 & 2.9750 \\ \hline
		\multicolumn{2}{|c|}{2.00} & 26.1439 &  26.1439 & 4  & 0.3875 & 2.9750 & \multicolumn{2}{c|}{16} & 27.2513 & 27.2513 & 7  & 2 & 0.1000 & 1.9625  \\ \hline
	    \multicolumn{2}{|c}{$C_r$} &
	    \multicolumn{5}{|c|}{} & \multicolumn{2}{c}{$C_r$} &
	    \multicolumn{6}{|c|}{} \\ \hline
		\multicolumn{2}{|c|}{20} & 19.3293 & 19.3293 & 2 & 0.8750 & 1.7750   & \multicolumn{2}{c|}{25} & 23.3583  & 23.3581 & 4  & 2 & 0.2375 & 2.2875   \\ \hline
	    \multicolumn{2}{|c|}{30} & 25.2777 & 25.2777 & 3  & 0.7250 & 2.9750  & \multicolumn{2}{c|}{30} & 26.0319 & 26.0338 & 6  & 3 & 0.2000 & 2.9750  \\ \hline
		\multicolumn{2}{|c|}{50} & 32.2092 & 32.2092 & 5  & 0.5625 & 5.0500 & \multicolumn{2}{c|}{40} & 30.0072 & 30.0069 & 7  & 4 & 0.1750 & 4.0750  \\ \hline
	   	\end{tabular}
}
		 \label{T1-HT1}
\end{table}
\vspace{-1em}
\footnotetext[1]{Bayes risk of BSP is obtained by simulation.}
\noindent
Similarly for the Type-I hybrid censoring, comparison is made between the BSP proposed by Liang and Yang (2013) and the proposed DSP  by taking set of coefficients $ a_0=2, a_1=2, a_2=2, C_s=0.5, r_s=0.3, C_\tau=5.0, C_r=30$, the hyper parameters $ a=2.5, b=0.8 $ and assume $\zeta^{*}=6$. The Bayes risk of the BSP involves a complicated integral, and it has been approximated by Monte Carlo simulation. So 
the Bayes risk here is an approximation of the exact Bayes risk of the BSP. 
\par
Further, when the Bayes risk has a unique minimum, the proposed algorithm gives us the optimum DSP without any additional 
computational effort.  Since the Bayes risk expression is quite complicated, it is not easy to prove theoretically that the function has a unique minimum. So we study graphical behavior of the Bayes risk by providing its contour plots with respect to $\tau$ (on x-axis) and $\zeta$ (on y- axis) with hyper parameter $a=2.5$, $b=0.8$ and set of coefficients mentioned above for Type-I and  Type-I hybrid censoring.

\begin{figure}[H]
\vspace{-0.5em}
\hspace*{-1.5cm} 
\subfloat{
  \includegraphics[width=10cm, height=10cm]{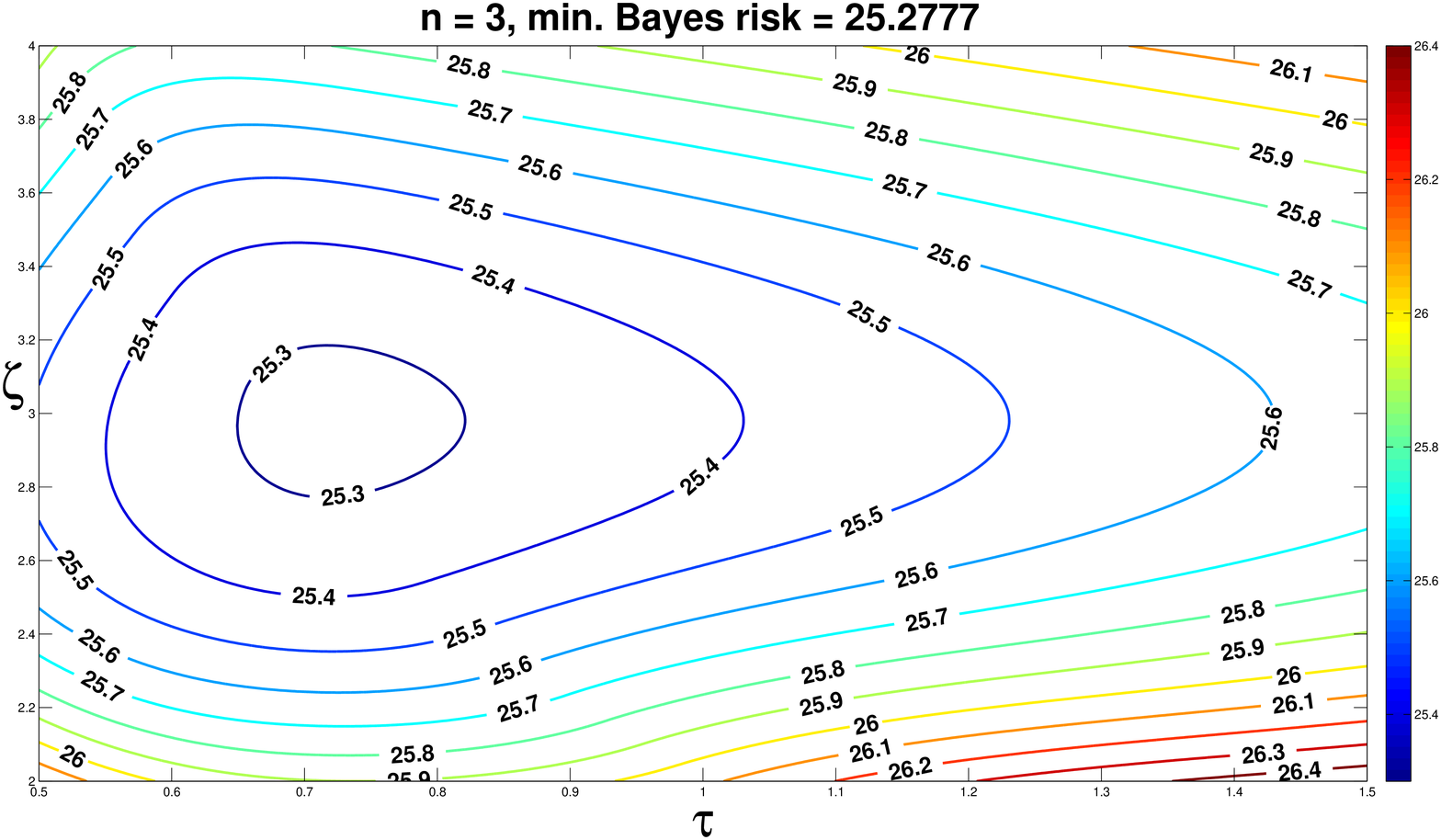}
}
\hspace{-1cm}%
\subfloat{
  \includegraphics[width=10cm, height=10cm]{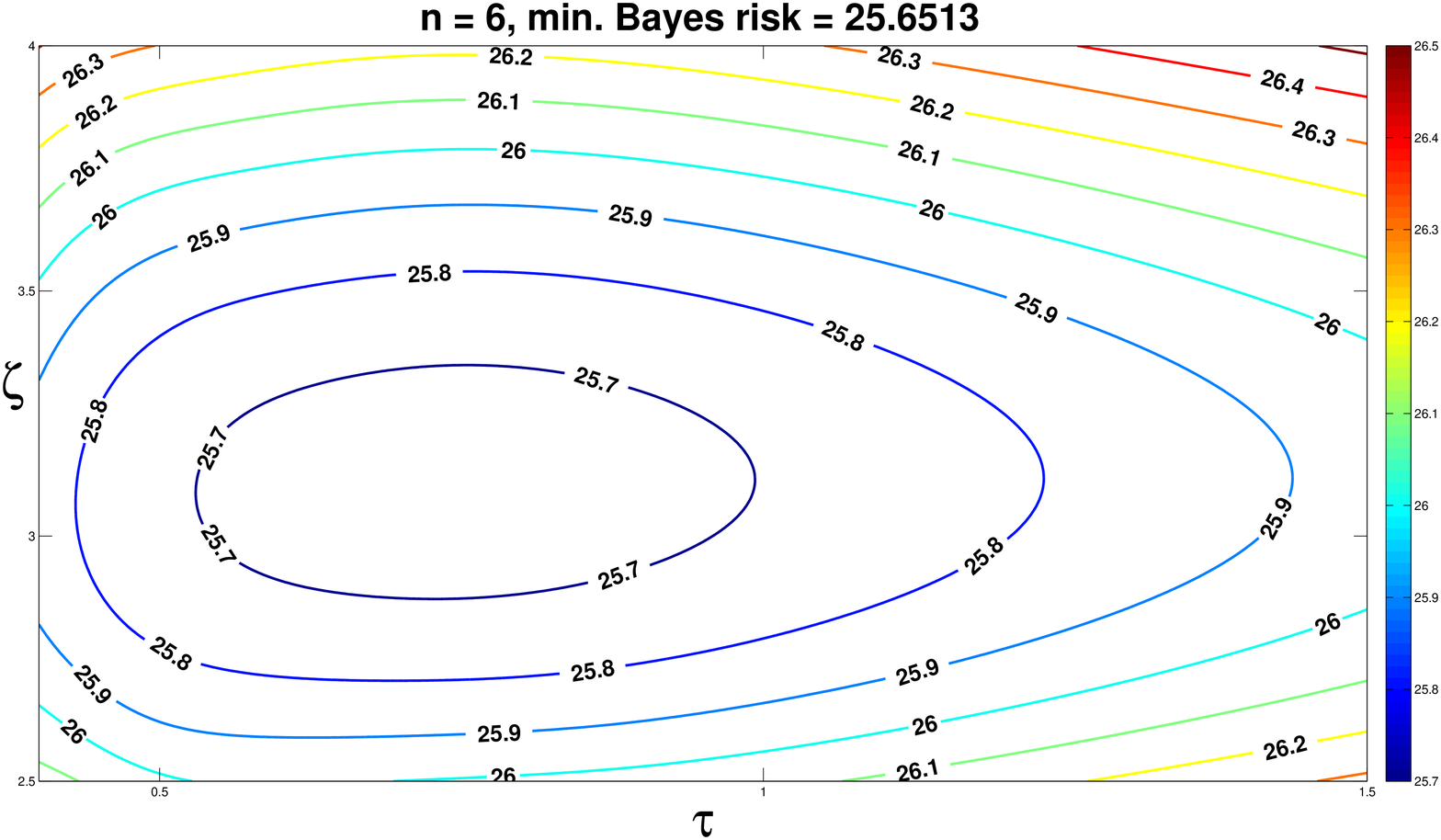}
}
\caption{Contour plot of Bayes risk with set of coefficient $a_0=2, a_1=2, a_2=2, C_s=0.5, r_s=0, C_{\tau}=0.5, C_r=30$ and $a=2.5, b=0.8$ for Type-I censoring.}
\label{fig:fig1}
\vspace{-2em}
\end{figure}

In Type-I censoring, the Bayes risk is a function of three parameters which are $n$, $\tau$ and $\zeta$, among which one is discrete and two are continuous. Since $n$ takes discrete values and from Theorem \ref{T_3.3} we know that optimal value of $n$ is bounded above, so for different values of $n$, we provide the contour plot of Bayes risk with respect to $\tau$ and $\zeta$ in Figure \ref{fig:fig1}. It is clear from contour plot that the Bayes risk has a unique minimum with respect to $\tau$ and $\zeta$. We also observe that the Bayes risk first decreases then increases as $n$ increases.
\begin{figure}[H]
\vspace{-1em}
\hspace*{-1.5cm} 
\subfloat{
  \includegraphics[width=10cm, height=10cm]{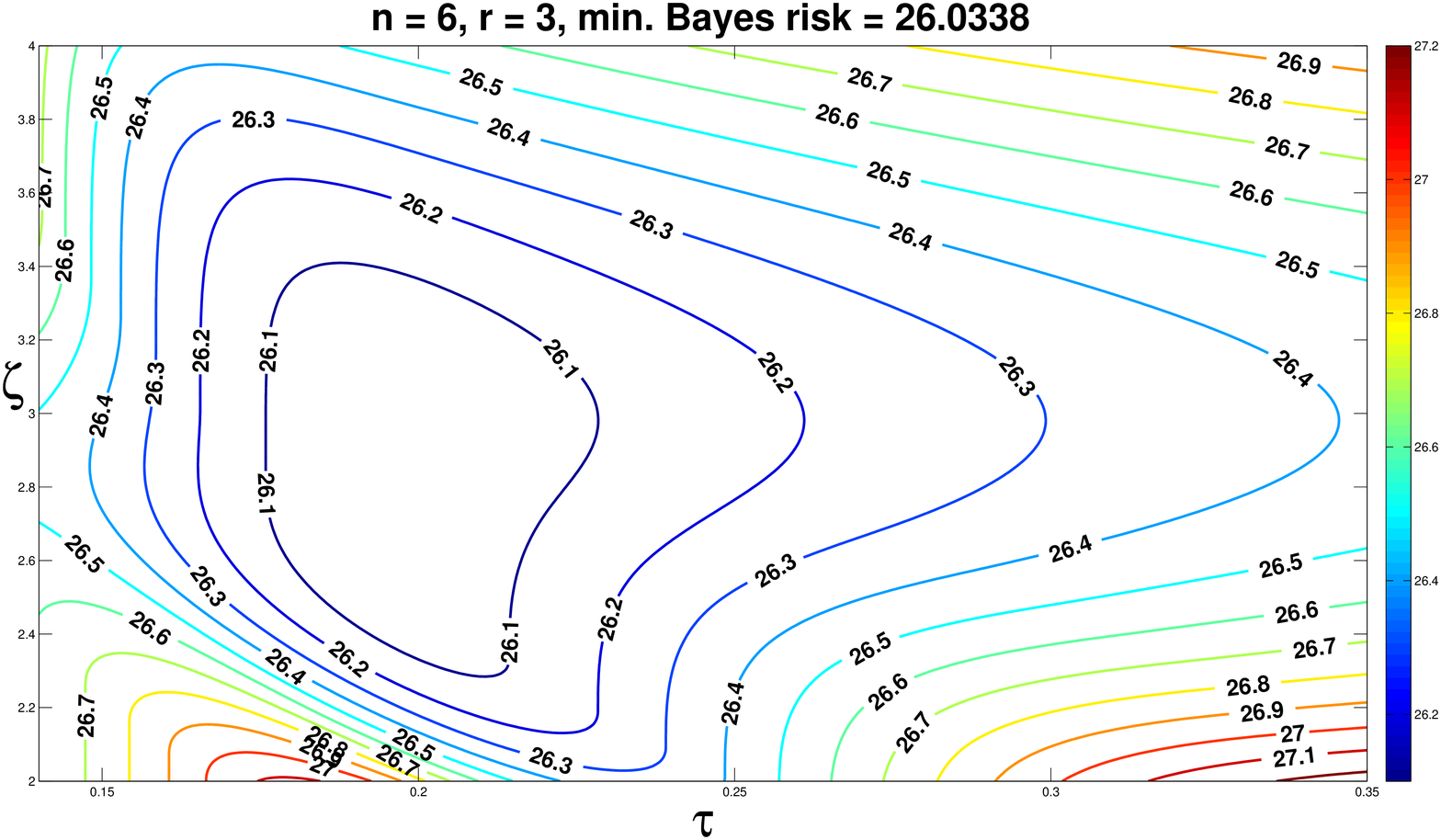}
}
\hspace{-1cm}%
\subfloat{
  \includegraphics[width=10cm, height=10cm]{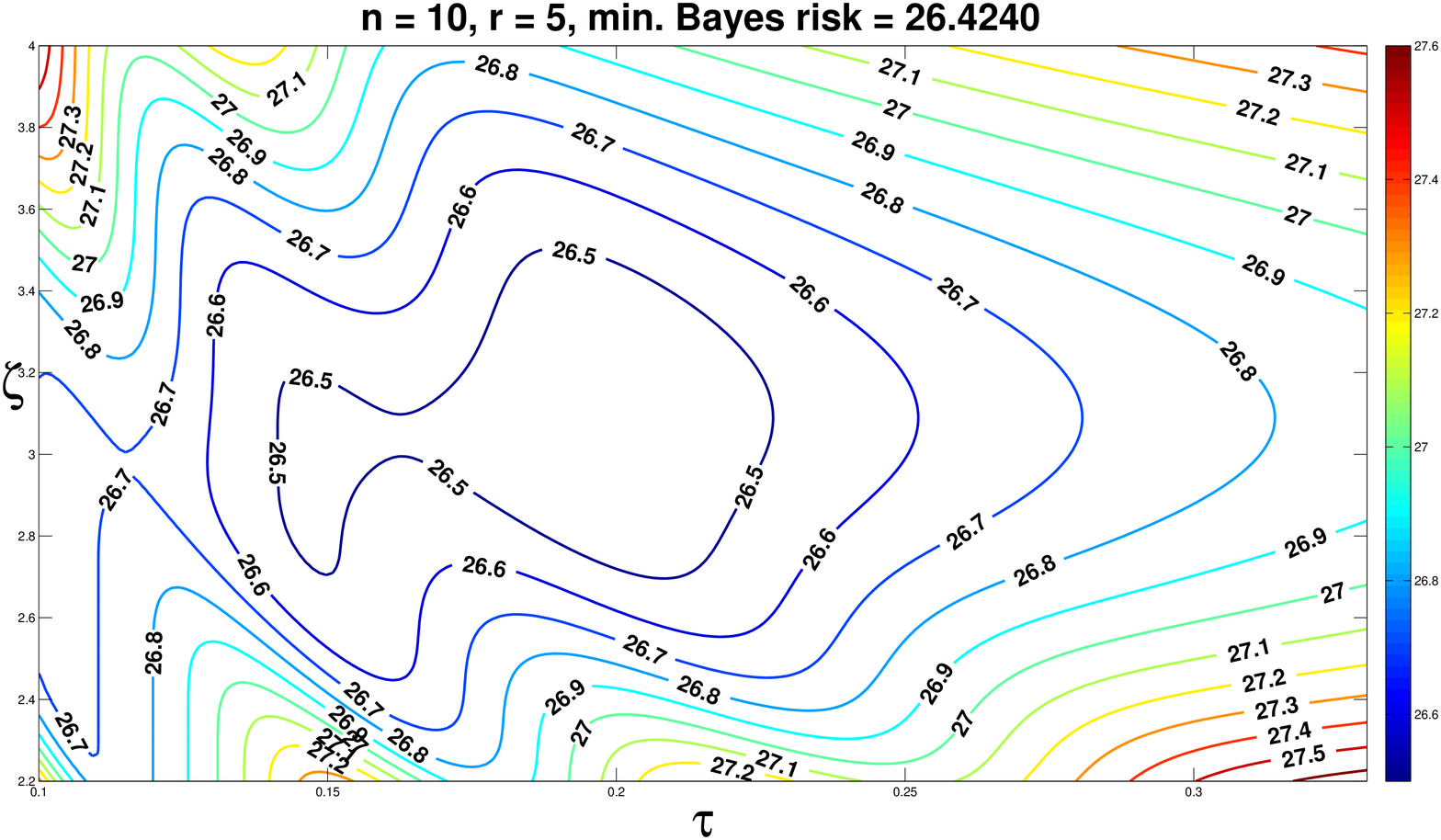}
}
\caption{Contour plot of Bayes risk with set of coefficient $a_0=2, a_1=2, a_2=2, C_s=0.5, r_s=0.3, C_{\tau}=5.0, C_r=30$ and $a=2.5, b=0.8$ for  Type-I hybrid censoring.}
\label{fig:fig2}
\end{figure}
 Similarly for  Type-I hybrid censoring, Bayes risk is a function of four parameters $n$, $r$, $\tau$, and $\zeta$ among which two are discrete and two continuous. Since optimal values of $n$ and $r$ are bounded above(see Theorem \ref{T_4.3}), so for different values of $n$ and $r$, we provide the contour plots of Bayes risk with respect to $\tau$ and $\zeta$ in Figure \ref{fig:fig2}. In this case also Bayes risk has unique minimum and as $n$ increases Bayes risk first decreases then increases. The contour plot can also be used for  predicting the range which includes the optimal values of $\tau$ and $\zeta$.
 
\subsection{\sc Numerical results for Higher degree polynomial and Non polynomial loss function}
  In Section \ref{S_6} we have observed that for a higher degree polynomial and for a non polynomial loss function the DSP can be obtained without any additional effort as compared to the BSP. The numerical results for fifth degree polynomial and for non polynomial loss function are tabulated in Tables \ref{HItable-1}-\ref{NTtable-5}. Standard values of parameter, coefficients and costs are defined in every section where needed.  In each table, only hyper parameters $a$ and $b$ or one coefficient or one cost  can change and the others are kept fixed.  It should be clear from the tables.
\subsubsection{\sc Fifth Degree Polynomial Loss Function}
For Type-I hybrid censoring, we present the optimum DSP for fifth degree polynomial loss function, with the standard set of hyper parameter, coefficients and costs: $a=1.5, b=0.8, a_0 = 2, a_1 = 2, a_2 = 2, a_3=2, a_4=2, a_5=2, C_r = 30, C_s = 0.5, r_s=0.3, C_\tau = 0.5, \zeta^{*}=6$. In Tables \ref{HItable-1}-\ref{HItable-6} the values of the different  hyper parameters or coefficients or costs are given in column $1$ and $7$. The minimum Bayes risk $r(n_0,r_0,\tau_0,\zeta_0)$ and the corresponding optimal sampling plan $(n_0,r_0,\tau_0,\zeta_0)$ are given in columns $2-6$ and $8-12$.

\begin{table}[H]
\caption{The minimum Bayes risk and optimum DSP for Type-I hybrid censoring as $a$ and $b$ varies }
\vspace{1em}
\centering
\resizebox{8.5cm}{0.1\textheight}{
		\begin{tabular}{|cc|ccccc|}\hline
	  $a$ & $b$  & $r( n_0, r_0, \tau_{0},\zeta_0)$ & $n_0$ & $r_0$ &  $\tau_{0}$ & $\zeta_0$  \\ \hline
         0.2 &   0.2 & 12.1795   &  5   & 4  & 1.2625  & 0.9750  \\ \hline
	     1.5 &   0.4 &  29.6469 &    2 &    2 &    2.9125  &  0.6250 \\ \hline
		 1.5 &    0.8 & 26.2983  & 5    & 4    & 1.6375     & 0.9250  \\ \hline
	     2.5 & 1.5 &  27.8324  &  5  &    4   &   1.6750  &  0.9250  \\ \hline
	     3.0 &   1.5 &   29.9061   & 4 &    3   &    1.7000  &  0.7500 	\\ \hline
	   	\end{tabular}
}
		 \label{HItable-1}
\end{table}

\begin{table}[H]
\caption{ The minimum Bayes risk and optimum DSP for Type-I hybrid censoring as $a_0$ or $a_1$ varies}
\vspace{1em}
\resizebox{16.5cm}{0.1\textheight}{
		\begin{tabular}{|c|ccccc|c|ccccc|}\hline
	  $a_0$  & $r( n_0, r_0,\tau_{0},\zeta_0)$ & $n_0$ & $r_0$ & $\tau_{0}$ & $\zeta_0$ & $a_1$ & $r( n_0, r_0, \tau_0, \zeta_0)$ & $n_0$ & $r_0$ & $\tau_{0}$ & $\zeta_0$  \\ \hline
	    0.5 & 25.8251 &   6 &   5 &   1.6625 &    1.0000 & 0.5 &  26.0091 &   6 &    5 &    1.6625  & 1.0000
 \\ \hline
	     1.0 & 25.9891  &  5  &   4   & 1.6250    & 0.9375  & 1.0 &   26.1080 &    5  &    4 &     1.6250  &    0.9375\\ \hline
	   1.5 &  26.1444  & 5  &  4  &    1.6375  &  0.9375 &    1.5 &  26.2038 &   5  &    4 &    1.6375 &   0.9375\\ \hline
	   	 2.0 &  26.2983  &  5    & 4   & 1.6375  &  0.9250
 & 2.0 &   26.2983 &    5 &    4 &    1.6375 &    0.9250\\ \hline
	    2.5 &  26.4515  &  5  &   4  &  1.6500 &   0.9250
 &  2.5 &   26.3919 &   5 &    4 &    1.6500 &    0.9250 \\ \hline
	   	\end{tabular}
}
		 \label{HItable-2}
\end{table}

\begin{table}[H]
\caption{  The minimum Bayes risk and optimum DSP for Type-I hybrid censoring as $a_2$ or $a_3$ varies }
\vspace{1em}
\resizebox{16.5cm}{0.1\textheight}{
		\begin{tabular}{|c|ccccc|c|ccccc|}\hline
	  $a_2$  & $r( n_0, r_0,\tau_{0},\zeta_0)$ & $n_0$ & $r_0$ & $\tau_{0}$ & $\zeta_0$ & $a_3$ & $r( n_0, r_0, \tau_0, \zeta_0)$ & $n_0$ & $r_0$ & $\tau_{0}$ & $\zeta_0$  \\ \hline
	    0.5 &  26.0403 &    6  &   5  &    1.6625 &   1.0125  & 0.5 &   25.9983  &    6  &    5  &    1.6250 &   1.0125 
 \\ \hline
	     1.0 & 26.1284  &  5  &   4  &    1.6250  &    0.9375 & 1.0 &  26.1027  &   5  &    4  &   1.6125 &    0.9500\\ \hline
	   1.5 & 26.2141 &   5 &    4 &    1.6250 &   0.9375 &    1.5 & 26.2022    & 5  & 4   &    1.6250 &   0.9375 \\ \hline
	   	 2.0 & 26.2983  &   5  &    4  &   1.6375  & 0.9250
 & 2.0 &    26.2983 &    5  &    4  &    1.6375 &   0.9250 \\ \hline
	    2.5 &  26.3810 &   5  &    4  &    1.6500 &   0.9250
 &  2.5 &  26.3911 &   5   &    4  &   1.6500  &  0.9125 \\ \hline
	   	\end{tabular}
}
		 \label{HItable-3}
\end{table}

\begin{table}[H]
\caption{ The minimum Bayes risk and optimum DSP for Type-I hybrid censoring as $a_4$ or $a_5$ varies }
\vspace{1em}
\resizebox{16.5cm}{0.1\textheight}{
		\begin{tabular}{|c|ccccc|c|ccccc|}\hline
	  $a_4$  & $r( n_0, r_0,\tau_{0},\zeta_0)$ & $n_0$ & $r_0$ & $\tau_{0}$ & $\zeta_0$ & $a_5$ & $r( n_0, r_0, \tau_0, \zeta_0)$ & $n_0$ & $r_0$ & $\tau_{0}$ & $\zeta_0$  \\ \hline
	    0.5 &  25.8618 &   6 &   5 &    1.5875 &   1.0500 & 0.5 &   25.4497 &  5  & 4   &   1.4125 &   1.0750
 \\ \hline
	     1.0 &  26.0212  &  5 &   4 &    1.5750 & 0.9625 & 1.0 &   25.8046  &  5 &    4 &    1.5000 & 1.0125\\ \hline
	   1.5 & 26.1656  &   5 &     4  &    1.6125  &    0.9500 &    1.5 &  26.0771 &   5   &    4   &    1.5750  &  0.9625\\ \hline
	   	 2.0 & 26.2983  &   5  &    4  &    1.6375  &   0.9250
 & 2.0 &    26.2983 &    5  &    4  &    1.6375 &   0.9250 \\ \hline
	    2.5 &  26.4212  &  5  &    4 &     1.6750 &   0.9125
 &  2.5 & 26.4838 &   5  &    4 & 1.7000   &    0.9000 \\ \hline
	   	\end{tabular}
}
		 \label{HItable-4}
\end{table}

\begin{table}[H]
\caption{ The minimum Bayes risk and optimum DSP for Type-I hybrid censoring as $C_s$ or $C_{\tau}$ varies}
\vspace{1em}
\resizebox{16.5cm}{0.1\textheight}{
		\begin{tabular}{|c|ccccc|c|ccccc|}\hline
	  $C_s$  & $r( n_0, r_0,\tau_{0},\zeta_0)$ & $n_0$ & $r_0$ & $\tau_{0}$ & $\zeta_0$ & $C_{\tau}$ & $r( n_0, r_0, \tau_0, \zeta_0)$ & $n_0$ & $r_0$ & $\tau_{0}$ & $\zeta_0$  \\ \hline
	    0.4 & 25.6655 &   7  &    5  &   1.2375 & 0.9875 &  0.2  &  25.9954 &    4 &    4 &    3.1375 &   0.9250
 \\ \hline
	     0.5 &   26.2983 &    5  &    4 &    1.6375 & 0.9250 &  0.5 &   26.2983 &    5 &    4 &   1.6375 &   0.9250 \\ \hline
	   0.8  &   27.4684 &    3 &     3 &    2.8000  &   0.8375  &    0.8 &    26.5453  &   6 &    4  &    1.1375   &  0.9250 \\ \hline
	   	 1.0 &   27.9656 &   2 &     2  &     2.5125 &     0.7125
 & 1.2 &   26.8039 &  6   &    4 &   1.1125 &    0.9250 \\ \hline
	    1.5 &  28.9099 &    1 &   1 &   2.0375 &   0.0125
 &  1.5   &   26.9779 &    7  &  4   & 0.8625   &  0.9250 \\ \hline
	   	\end{tabular}
}
		 \label{HItable-5}
\end{table}
\noindent From Table \ref{HItable-1} it is clear that as $a$  increases for fixed $b$ the minimum Bayes risk increases and as $b$ increases for fixed $a$ the minimum Bayes risk decreases. In Tables \ref{HItable-2}-\ref{HItable-4} we observed that as coefficients $a_0, a_1, a_2, a_3, a_4$ and $a_5$ increases the minimum Bayes risk increases. In Tables \ref{HItable-5}-\ref{HItable-6} if costs $C_s, C_{\tau}$ and $C_r$ increases then the minimum Bayes risk increases and  when the salvage value $r_s$ increases, the  minimum Bayes risk decreases. 
\begin{table}[H]
\caption{The minimum Bayes risk and optimum DSP for Type-I hybrid censoring as $C_r$ or $r_s$ varies }
\vspace{1em}
\resizebox{16.5cm}{0.1\textheight}{
		\begin{tabular}{|c|ccccc|c|ccccc|}\hline
	  $C_r$  & $r( n_0, r_0,\tau_{0},\zeta_0)$ & $n_0$ & $r_0$ & $\tau_{0}$ & $\zeta_0$ & $r_s$ & $r( n_0, r_0, \tau_0, \zeta_0)$ & $n_0$ & $r_0$ & $\tau_{0}$ & $\zeta_0$  \\ \hline
	   25 &   22.7787 &   4 &    3 &   1.5750 &   0.7875 &  0.05 &  26.5544  &  4 &    4  &    3.0250  &    0.9250
 \\ \hline
	    35 &   29.6324 &    6  &    5  &   1.6375  &    1.0375 &   0.10 &  26.5352 &   4 &     4 &   3.0000 &    0.9250 \\ \hline
	  50 &   38.8182  &   8   &    7   &   1.5375  &    1.2500 &    0.20  &   26.4478  &   5  &    4   &    1.6625 &     0.9250\\ \hline
	   	 65 &   47.1201 &  10 &    9 &    1.4500  &   1.4125
 &  0.30  & 26.2983 &   5 &   4 &    1.6375   &  0.9250 \\ \hline
	   85  &   57.2562 &  12  &   11  &    1.3750  &   1.5625
 &  0.35 &  26.2220 &   6  &  4 &    1.1375   & 0.9250 \\ \hline
	   	\end{tabular}
}
		 \label{HItable-6}
\end{table} 
\par
For Type-I censoring, we present the optimum DSP for the fifth degree polynomial, with the following standard set of hyper parameters, coefficients and costs: $a=1.5, b=0.8, a_0 = 2, a_1 = 2, a_2 = 2, a_3=2, a_4=2, a_5=2, C_r = 30, C_s = 0.5, r_s=0, C_\tau = 0.5, \zeta^{*}=6$. In Tables \ref{Ttable-1}-\ref{Ttable-6} the values of the different hyper parameters or coefficients or costs are given in columns $1$ and $6$.  The minimum Bayes risk is denoted by $r( n_0, \tau_{0},\zeta_0)$ and the corresponding sampling plan is $( n_0, \tau_{0},\zeta_0)$.

\begin{table}[H]
\caption{ The minimum Bayes risk and optimum DSP for Type-I censoring as $a$ and $b$ varies}
\vspace{1em}
\centering
\resizebox{8.5cm}{0.1\textheight}{
		\begin{tabular}{|cc|cccc|}\hline
	  $a$ & $b$  & $r( n_0, \tau_{0},\zeta_0)$ & $n_0$ &  $\tau_{0}$ & $\zeta_0$  \\ \hline
	 1.5 &    0.8  &  27.0038 &   5 &    1.7000 & 0.9375  \\ \hline
	  1.5 &   1.2 &   22.9851 &   6 &   1.6875  &   1.0750  \\ \hline
	 2.5 &   2.5 &   21.1783  &  6  &    1.6125 & 1.2250\\ \hline
	 3.0 & 2.5  & 24.8622 & 6 & 1.7000 & 1.1250   \\ \hline
	 3.0 & 3.0 & 21.4133 & 6 & 1.5750 & 1.2625\\ \hline
	   	\end{tabular}
}
		 \label{Ttable-1}
\end{table}
\begin{table}[H]
\caption{ The minimum Bayes risk and optimum DSP for Type-I censoring as $a_0$ or $a_1$ varies }
\vspace{1em}
\resizebox{16.5cm}{0.1\textheight}{
		\begin{tabular}{|c|cccc|c|cccc|}\hline
	  $a_0$  & $r( n_0, \tau_{0},\zeta_0)$ & $n_0$ &  $\tau_{0}$ & $\zeta_0$ & $a_1$ & $r( n_0, \tau_0, \zeta_0)$ & $n_0$ & $\tau_{0}$  & $\zeta_0$  \\ \hline
		\multicolumn{1}{|c|}{0.5} &   26.5210 &   5 &   1.7125 &   0.9500 & \multicolumn{1}{c|}{0.5} &  26.7003 &   5 &  1.7000 &   0.9500\\ \hline
	    \multicolumn{1}{|c|}{1.0} &  26.6833  &  5  &  1.7000 &   0.9375 &  \multicolumn{1}{c|}{1.0} & 26.8029 &    5 &    1.7000  &  0.9500 \\ \hline
		\multicolumn{1}{|c|}{1.5} & 26.8436    & 5  & 1.7000   & 0.9375 & 	\multicolumn{1}{c|}{1.5} & 26.9035 &   5 &    1.7000 &   0.9375 \\ \hline
		\multicolumn{1}{|c|}{2.0} & 27.0038  &  5 &    1.7000 &   0.9375 & 	\multicolumn{1}{c|}{2.0} & 27.0038 &    5 &    1.7000 &   0.9375\\ \hline
		\multicolumn{1}{|c|}{2.5} & 27.1626 &   5 &    1.6875 &   0.9250 & 	\multicolumn{1}{c|}{2.5} & 27.1026 & 5 &   1.7000 &    0.9250
 \\ \hline
	   	\end{tabular}
}
		 \label{Ttable-2}
\end{table}

\begin{table}[H]
\caption{  The minimum Bayes risk and optimum DSP for Type-I censoring as $a_2$ or $a_3$ varies }
\vspace{1em}
\resizebox{16.5cm}{0.1\textheight}{
		\begin{tabular}{|c|cccc|c|cccc|}\hline
	  $a_2$  & $r( n_0, \tau_{0},\zeta_0)$ & $n_0$ &  $\tau_{0}$ & $\zeta_0$ & $a_3$ & $r( n_0, \tau_0, \zeta_0)$ & $n_0$ & $\tau_{0}$  & $\zeta_0$  \\ \hline
		\multicolumn{1}{|c|}{0.5} &   26.7282 &    5 &    1.6875 &    0.9500& \multicolumn{1}{c|}{0.5} &   26.6814 &    5 &  1.6750 &    0.9625\\ \hline
	    \multicolumn{1}{|c|}{1.0} & 26.8216  &    5 &    1.6875 &    0.9500 &  \multicolumn{1}{c|}{1.0} &  26.7930   & 5  &   1.6750 &    0.9500 \\ \hline
		\multicolumn{1}{|c|}{1.5} & 26.9135 &    5 &    1.6875 &    0.9375 & 	\multicolumn{1}{c|}{1.5} &  26.9006 &  5 &    1.6875 &    0.9375 \\ \hline
		\multicolumn{1}{|c|}{2.0} & 27.0038 &    5 &    1.7000 &    0.9375 & 	\multicolumn{1}{c|}{2.0} & 27.0038 &    5 &    1.7000 &   0.9375\\ \hline
		\multicolumn{1}{|c|}{2.5} & 27.0919 &    5 &    1.7000 &    0.9250 & 	\multicolumn{1}{c|}{2.5} & 27.1033  &  5  &    1.7000 &   0.9250
 \\ \hline
	   	\end{tabular}
}
		 \label{Ttable-3}
\end{table}

\begin{table}[H]
\caption{  The minimum Bayes risk and optimum DSP for Type-I censoring as $a_4$ or $a_5$ varies }
\vspace{1em}
\resizebox{16.5cm}{0.1\textheight}{
		\begin{tabular}{|c|cccc|c|cccc|}\hline
	  $a_4$  & $r( n_0, \tau_{0},\zeta_0)$ & $n_0$ &  $\tau_{0}$ & $\zeta_0$ & $a_5$ & $r( n_0, \tau_0, \zeta_0)$ & $n_0$ & $\tau_{0}$  & $\zeta_0$  \\ \hline
		\multicolumn{1}{|c|}{0.5} &   26.5349   &  5  &    1.6375 &    0.9875 & \multicolumn{1}{c|}{0.5} &   26.0954 &   5 &    1.5375 &    1.0750 \\ \hline
	    \multicolumn{1}{|c|}{1.0} & 26.7044 &   5 &    1.6625 &    0.9750 &  \multicolumn{1}{c|}{1.0} &  26.4724 &    5 &    1.6000 & 1.0125 \\ \hline
		\multicolumn{1}{|c|}{1.5} & 26.8598 &    5 &    1.6750 & 0.9500 & 	\multicolumn{1}{c|}{1.5} &  26.7653    & 5 &   1.6500 &    0.9625  \\ \hline
		\multicolumn{1}{|c|}{2.0} & 27.0038 &    5 &    1.7000 &    0.9375 & 	\multicolumn{1}{c|}{2.0} & 27.0038 &    5 &    1.7000 &   0.9375\\ \hline
		\multicolumn{1}{|c|}{2.5} & 27.1370 &   5 &    1.7125 &   0.9125 & 	\multicolumn{1}{c|}{2.5} & 27.2053    & 5 &    1.7375 &    0.9000
 \\ \hline
	   	\end{tabular}
}
		 \label{Ttable-4}
\end{table}
\noindent From Tables \ref{Ttable-2}-\ref{Ttable-4}  we observed that as coefficients of acceptance cost  $a_0, a_1, a_2, a_3, a_4$ and $a_5$ increase, the minimum Bayes risk $r(n_0, \tau_{0},\zeta_0)$ also increases and the optimum value of $\zeta_0$ decreases. It is also observed that the optimum value of $\tau_0$ increases as coefficient $a_2, a_3, a_4$ and $a_5$ increases. In Tables \ref{Ttable-5}-\ref{Ttable-6} costs $C_s, C_{\tau}, C_r$ and $r_s$ are varies for different values and we observed that behaviour of minimum Bayes risk and  optimum DSP are as expected.
\begin{table}[H]
\caption{  The minimum Bayes risk and optimum DSP for Type-I censoring as $C_s$ or $C_{\tau}$ varies }
\vspace{1em}
\resizebox{16.5cm}{0.1\textheight}{
		\begin{tabular}{|c|cccc|c|cccc|}\hline
	  $C_s$  & $r( n_0, \tau_{0},\zeta_0)$ & $n_0$ &  $\tau_{0}$ & $\zeta_0$ & $C_{\tau}$ & $r( n_0, \tau_0, \zeta_0)$ & $n_0$ & $\tau_{0}$  & $\zeta_0$  \\ \hline
		\multicolumn{1}{|c|}{0.2} &   25.0552  &  9 &  1.4750 &    1.0750 & \multicolumn{1}{c|}{0.2} &   26.3960 &   5 &    2.5500 & 0.9750 \\ \hline
	    \multicolumn{1}{|c|}{0.3} & 25.8550  &  7 &    1.6375 &   1.0250 &  \multicolumn{1}{c|}{0.5} &   27.0038 &   5 &    1.7000 &    0.9375  \\ \hline
		\multicolumn{1}{|c|}{0.5} & 27.0038  &  5 & 1.7000 &    0.9375 & 	\multicolumn{1}{c|}{0.8} &  27.4884 &   5 &    1.5500 &    0.9250  \\ \hline
		\multicolumn{1}{|c|}{0.8} & 28.2251 &   3 &    2.5250 &    0.8375 & 	\multicolumn{1}{c|}{1.2} &  28.0462  & 6 &    1.1250 &   0.9250 \\ \hline
		\multicolumn{1}{|c|}{1.2} & 29.0845  &  2 & 2.3000 &    0.7125 & 	\multicolumn{1}{c|}{1.5} & 28.3763 &    6 &   1.0750 & 0.9250
 \\ \hline
	   	\end{tabular}
}
		 \label{Ttable-5}
\end{table} 

\begin{table}[H]
\caption{  The minimum Bayes risk and optimum DSP for Type-I censoring as $C_r$ or $r_s$ varies }
\vspace{1em}
\resizebox{16.5cm}{0.1\textheight}{
		\begin{tabular}{|c|cccc|c|cccc|}\hline
	  $C_r$  & $r( n_0, \tau_{0},\zeta_0)$ & $n_0$ &  $\tau_{0}$ & $\zeta_0$ & $r_s$ & $r( n_0, \tau_0, \zeta_0)$ & $n_0$ & $\tau_{0}$  & $\zeta_0$  \\ \hline
		\multicolumn{1}{|c|}{25} &    23.4949 &    4 &    1.5875 &    0.7875 & \multicolumn{1}{c|}{0.05} &   26.9583 &  5 &    1.6750  &   0.9375  \\ \hline
	    \multicolumn{1}{|c|}{50} & 39.5495 &   7 &    1.8250 &    1.2250 &  \multicolumn{1}{c|}{0.10} &   26.9121  &  5  &   1.6625 &    0.9375\\ \hline
		\multicolumn{1}{|c|}{85} & 58.1138 &  11 &    1.7875 &    1.5625 & 	\multicolumn{1}{c|}{0.20} &    26.8185  &   5 &  1.6250  &  0.9375 \\ \hline
		\multicolumn{1}{|c|}{100} & 65.2465  & 12   & 1.7750  &  1.6500 & 	\multicolumn{1}{c|}{0.30} &  26.7229 &   5 &   1.6000 &   0.9250  \\ \hline
		\multicolumn{1}{|c|}{125} & 76.3677  & 14   &  1.7500  &  1.7875 & 	\multicolumn{1}{c|}{0.40} &  26.6071 &   6 &  1.2625   & 0.9375\\ \hline
	   	\end{tabular}
}
		 \label{Ttable-6}
\end{table}

\subsubsection{\sc Non Polynomial Loss Function}
Since the form of the loss function can vary so in this section, we present the optimum DSP for the non polynomial loss function 
considered in Section \ref{S_6}. For the Type-I hybrid censoring, the following standard set of hyper parameters, coefficients and costs: $a=2.5, b=0.8, a_0 = 2, a_1 = 2, a_2 = 2, C_r = 30, C_s = 0.5, r_s=0.3, C_\tau = 5.0, \zeta^{*}=6$ are used. In Tables \ref{NHtable-1}-\ref{NHtable-5} the values of the different hyper parameter or coefficients or costs are given in column $1$ and $7$ and the others 
are kept fixed.

\begin{table}[H]
\caption{The minimum Bayes risk and the optimum DSP for Type-I hybrid censoring as $a$ and $b$ varies }
\vspace{1em}
\centering
\resizebox{8.5cm}{0.1\textheight}{
		\begin{tabular}{|cc|ccccc|}\hline
	  $a$ & $b$  & $r( n_0, r_0, \tau_{0},\zeta_0)$ & $n_0$ & $r_0$ &  $\tau_{0}$ & $\zeta_0$  \\ \hline
        0.2 &   0.2 &  10.5326 &    5 &    2 &    0.1750 &    2.3125  \\ \hline
	    1.5 &   0.4 &  27.4453 &    6 &    3 &    0.3125 &   1.9375 \\ \hline
	    1.5  &  0.8 & 20.2414 &  8 &  4 & 0.2250 & 2.6125 \\ \hline
	    2.5 &   0.8 & 28.4481  & 6 &  3 & 0.3125  & 1.9625   \\ \hline
	    3.0 &  1.5 & 22.6152  &  6 &    3 &    0.1875 & 2.9500 	\\ \hline
	   	\end{tabular}
}
		 \label{NHtable-1}
\end{table}

\begin{table}[H]
\caption{ The minimum Bayes risk and optimum DSP for Type-I hybrid censoring as $a_0$ or $a_1$ varies}
\vspace{1em}
\resizebox{16.5cm}{0.1\textheight}{
		\begin{tabular}{|c|ccccc|c|ccccc|}\hline
	  $a_0$  & $r( n_0, r_0,\tau_{0},\zeta_0)$ & $n_0$ & $r_0$ & $\tau_{0}$ & $\zeta_0$ & $a_1$ & $r( n_0, r_0, \tau_0, \zeta_0)$ & $n_0$ & $r_0$ & $\tau_{0}$ & $\zeta_0$  \\ \hline
	     0.5 &  27.9446  &   6    &   3   &   0.3000  &   2.0375 &  0.5  &   27.5833  &   6   &    3  &    0.2875 &   2.1375
 \\ \hline
	     1.0 &   28.1152 &   6  &    3  &    0.3000 &  2.0125 &  1.0  &   27.8877  &    6  &    3  &    0.3000  &    2.0750 \\ \hline
	    1.5   &  28.2840  &   6   &   3  &    0.3125  &   1.9875 &    1.5   &   28.1737  &  6   &    3  &     0.3000  &   2.0250 \\ \hline
	   	 2.0  &  28.4481  &   6   &   3  &    0.3125  &   1.9625
 &  2.0  &   28.4481  &   6   &    3   &   0.3125  &  1.9625 \\ \hline
   2.5   & 	  28.6111  &    6  &   3   &    0.3125  &  1.9375 
 &   2.5  &   28.7106  &   6  &    3   &    0.3250  &  1.9125 \\ \hline
	   	\end{tabular}
}
		 \label{NHtable-2}
\end{table}

\begin{table}[H]
\caption{ The minimum Bayes risk and optimum DSP for Type-I hybrid censoring as $a_2$ or $C_{\tau}$ varies }
\vspace{1em}
\resizebox{16.5cm}{0.1\textheight}{
		\begin{tabular}{|c|ccccc|c|ccccc|}\hline
	  $a_2$  & $r( n_0, r_0,\tau_{0},\zeta_0)$ & $n_0$ & $r_0$ & $\tau_{0}$ & $\zeta_0$ & $C_{\tau}$ & $r( n_0, r_0, \tau_0, \zeta_0)$ & $n_0$ & $r_0$ & $\tau_{0}$ & $\zeta_0$  \\ \hline
	    0.5 &  21.4982  &  5  & 3  &   0.1500  &   1.5750 & 0.5 &  27.2156  &  4   &   4   &    1.3000  &   2.1000
 \\ \hline
	     1.0 & 25.4359 &   6  &   3 &   0.2000  &   2.9750 & 1.5 &  27.6168 &    4  &   3  &   0.6000  &   1.9625\\ \hline
	     1.5 &  27.3171  &   6 &    3 &    0.2625  &   2.3125  &    3.0 & 28.0288  & 5  &   3 &  0.4125 & 1.9625 \\ \hline
	   	 2.0 & 28.4481  &  6   &    3  &    0.3125  &   1.9625 & 4.0  &   28.2477  &  6  &    3  &   0.3125 &    1.9625 \\ \hline
	     2.5 &  29.1885  &   5  &    2  &    0.2875 & 1.5250 &  5.0  &  28.4481  &   6  &  3  &    0.3125  &   1.9625\\ \hline
	   	\end{tabular}
}
		 \label{NHtable-3}
\end{table}
\noindent  From Table \ref{NHtable-1} it is clear that as $a$  increases, for fixed $b$, the  minimum Bayes risk increases and as $b$ 
increases, for fixed $a$, the minimum Bayes risk decreases. In Tables \ref{NHtable-2}-\ref{NHtable-3} when coefficient $a_0, a_1$ and $a_2$ increases then minimum Bayes risk  $r( n_0, r_0, \tau_{0},\zeta_0)$ increases. The optimum value of $\tau_0$  increases  and $\zeta_0$ decreases as $a_0$ and $a_1$ increase. In Tables \ref{NHtable-3}-\ref{NHtable-4} when costs $C_s, C_{\tau}$ and $C_r$ increase, then 
the minimum Bayes risk increases.  In Table \ref{NHtable-5} when the salvage value $r_s$ increases then the minimum Bayes risk 
decreases as expected and optimal sample sizes $n_0$ and $r_0$ increase.
\begin{table}[H]
\caption{ The minimum Bayes risk and optimum DSP for Type-I hybrid censoring as $C_s$ or $C_r$ varies }
\vspace{1em}
\resizebox{16.5cm}{0.1\textheight}{
		\begin{tabular}{|c|ccccc|c|ccccc|}\hline
	  $C_s$  & $r( n_0, r_0,\tau_{0},\zeta_0)$ & $n_0$ & $r_0$ & $\tau_{0}$ & $\zeta_0$ & $C_r$ & $r( n_0, r_0, \tau_0, \zeta_0)$ & $n_0$ & $r_0$ & $\tau_{0}$ & $\zeta_0$  \\ \hline
	    0.4 &   27.7042  &  10 &   4  &    0.2250 & 2.1000 & 25 &   24.8091 &    5   & 2   &    0.2875 &    1.5125
 \\ \hline
	     0.5 &   28.4481 &   6  &    3  &    0.3125 & 1.9625 & 35 &    31.6670  &   8   &    4  &    0.2750 &    2.3250\\ \hline
	   0.6  & 28.9501  &  4   &    2   &   0.3250   &  1.7375  &   50  & 39.5133  &  10   &    6   &    0.2750  &   3.1000 \\ \hline
	   	 0.7  &  29.3501 &   4   &    2  &   0.3250  &  1.7375 
 & 65 &     45.5177 &  11  &    7  &    0.2625  &  3.6875 \\ \hline
	    0.8 &  29.6455  &  2 &   1   &    0.3625  &   0.0125 
 &  85 &  51.6634 &  12 &    8  & 0.2375 &  4.3625 \\ \hline
	   	\end{tabular}
}
		 \label{NHtable-4}
\end{table}

\begin{table}[H]
\caption{ The minimum Bayes risk and optimum DSP for Type-I hybrid censoring as $r_s$ varies}
\vspace{1em}
\centering
\resizebox{8.5cm}{0.1\textheight}{
		\begin{tabular}{|c|ccccc|}\hline
	  $r_s$  & $r( n_0, r_0, \tau_{0},\zeta_0)$ & $n_0$ & $r_0$ &  $\tau_{0}$ & $\zeta_0$  \\ \hline
		0.05 & 29.1184  &  3   &    2   & 0.4750  &   1.7375 \\ \hline
	    0.10 & 29.0215  &  4   &    2   &    0.3375 &     1.7375 \\ \hline
		0.20 &  28.7866 &   4  &    2  &    0.3375  &    1.7375 \\ \hline
	    0.30 &  28.4481 &  6   &    3   &   0.3125  &  1.9625  \\ \hline
		0.40 & 27.9798  &  8  &    3  &    0.2125  &  1.9625 	\\ \hline
	   	\end{tabular}
}
		 \label{NHtable-5}
\end{table}
 For the Type-I censoring, we also present the optimum DSP for the non polynomial loss function considered in Section \ref{S_6}, with the following standard set of hyper parameters, coefficients and costs: $a=2.5, b=0.8, a_0 = 2, a_1 = 2, a_2 = 2, C_r = 30, C_s = 0.5, r_s=0, C_\tau = 0.5, \zeta^{*}=6$. Numerical results are given in Tables \ref{NTtable-1}-\ref{NTtable-5} where only hyper parameters $a$ and $b$ or one coefficient or one cost is varying and others are kept fixed as defined above.

\begin{table}[H]
\caption{The minimum Bayes risk and optimum DSP for Type-I censoring as $a$ and $b$ varies }
\vspace{1em}
\centering
\resizebox{8.5cm}{0.1\textheight}{
		\begin{tabular}{|cc|cccc|}\hline
	  $a$ & $b$  & $r( n_0, \tau_{0},\zeta_0)$ & $n_0$ &  $\tau_{0}$ & $\zeta_0$  \\ \hline
	 1.5 &    0.4 &   26.6262 &    3 &   1.1125 & 1.9375  \\ \hline
	  1.5 &    0.8 &   19.4142 &   4 &   0.9000 &   2.6125  \\ \hline
	 2.5 &    0.8 &   27.5603  &  4 &    1.0750 &   2.0625 \\ \hline
	 2.5 &    1.2 &   22.2069  &  4 &    0.8875  &  2.6500 \\ \hline
	 3.0 &    1.5 &   21.8535  &  4 &    0.8250 &  2.8750\\ \hline
	   	\end{tabular}
}
		 \label{NTtable-1}
\end{table}
\begin{table}[H]
\caption{ The minimum Bayes risk and optimum DSP for Type-I censoring as $a_0$ or $a_1$ varies }
\vspace{1em}
\resizebox{16.5cm}{0.1\textheight}{
		\begin{tabular}{|c|cccc|c|cccc|}\hline
	  $a_0$  & $r( n_0, \tau_{0},\zeta_0)$ & $n_0$ &  $\tau_{0}$ & $\zeta_0$ & $a_1$ & $r( n_0, \tau_0, \zeta_0)$ & $n_0$ & $\tau_{0}$  & $\zeta_0$  \\ \hline
		\multicolumn{1}{|c|}{0.5} &  27.0238 &   4 &    1.0750 &    2.1375 & \multicolumn{1}{c|}{0.5} &  26.6463 &    4    &    1.0375  & 2.2500 \\ \hline
	    \multicolumn{1}{|c|}{1.0} &  27.2050 &    4   &    1.0750  &   2.1125 &  \multicolumn{1}{c|}{1.0} &  26.9657  &   4  &    1.0500  &   2.1750  \\ \hline
		\multicolumn{1}{|c|}{1.5} & 27.3838 &   4   &   1.0750   & 2.0875 & 	\multicolumn{1}{c|}{1.5} & 27.2702   &  4   &    1.0625 &   2.1125  \\ \hline
		\multicolumn{1}{|c|}{2.0} & 27.5603   &  4   &   1.0750   & 2.0625 & 	\multicolumn{1}{c|}{2.0} & 27.5603 &   4   &   1.0750  &    2.0625 \\ \hline
		\multicolumn{1}{|c|}{2.5} & 27.7262  &   3   &   1.1125  &  1.9375 & 	\multicolumn{1}{c|}{2.5} & 27.8216 &   3  &   1.1250  &   1.9125
 \\ \hline
	   	\end{tabular}
}
		 \label{NTtable-2}
\end{table}

\begin{table}[H]
\caption{ The minimum Bayes risk and optimum DSP for Type-I censoring as $a_2$ or $C_s$ varies}
\vspace{1em}
\resizebox{16.5cm}{0.1\textheight}{
		\begin{tabular}{|c|cccc|c|cccc|}\hline
	  $a_2$  & $r( n_0, \tau_{0},\zeta_0)$ & $n_0$ &  $\tau_{0}$ & $\zeta_0$ & $C_s$ & $r( n_0, \tau_0, \zeta_0)$ & $n_0$ & $\tau_{0}$  & $\zeta_0$  \\ \hline
		\multicolumn{1}{|c|}{0.5} &    20.9985  &   4 &    0.5375   & 4.7375 & \multicolumn{1}{c|}{0.2} &   25.9956  &    8  &    0.8750  &   2.3000 \\ \hline
	    \multicolumn{1}{|c|}{1.0} & 24.5967  &  4 &    0.8000  &   3.0500 &  \multicolumn{1}{c|}{0.3} &  26.6479 &   6  &   0.9250  &   2.2000\\ \hline
		\multicolumn{1}{|c|}{1.5} & 26.4246  &  4  &  0.9625 &   2.4125 & 	\multicolumn{1}{c|}{0.5} &  27.5603 &  4  &    1.0750  &    2.0625 \\ \hline
		\multicolumn{1}{|c|}{2.0} & 27.5603 &    4   &    1.0750  &  2.0625 & 	\multicolumn{1}{c|}{0.8} & 28.3770 &   2    &    0.9500  &    1.7375 \\ \hline
		\multicolumn{1}{|c|}{2.5} & 28.3162  &   3   &    1.2250  &    1.7375  & 	\multicolumn{1}{c|}{1.2} & 29.1411  &   1  &   0.7250 &    0.6000
 \\ \hline
	   	\end{tabular}
}
		 \label{NTtable-3}
\end{table}
\noindent It is clear from the Tables \ref{NTtable-2}-\ref{NTtable-3} that the minimum Bayes risk increases as the coefficient $a_0, a_1$ and $a_2$ increase. In Table \ref{NTtable-3} when the cost $C_s$ increases the minimum Bayes risk increases and $n_0$ decreases. In Table \ref{NTtable-4} as the cost $C_{\tau}$ increases then the minimum Bayes risk increases and $\tau_0$ decreases. When the cost $C_r$ increases then the minimum Bayes risk increases, $n_0$ and $\tau_0$ increase and $\tau_0$ decreases.
\begin{table}[H]
\caption{ The minimum Bayes risk and optimum DSP for Type-I censoring as $C_{\tau}$ or $C_r$ varies}
\vspace{1em}
\resizebox{16.5cm}{0.1\textheight}{
		\begin{tabular}{|c|cccc|c|cccc|}\hline
	  $C_{\tau}$  & $r( n_0, \tau_{0},\zeta_0)$ & $n_0$ &  $\tau_{0}$ & $\zeta_0$ & $C_r$ & $r( n_0, \tau_0, \zeta_0)$ & $n_0$ & $\tau_{0}$  & $\zeta_0$  \\ \hline
		\multicolumn{1}{|c|}{0.2} &   27.2069  &  4  &    1.3000 &   2.0875 & \multicolumn{1}{c|}{25} &   24.0664 & 2   &  1.0625  &   1.5125 \\ \hline
	    \multicolumn{1}{|c|}{0.5} & 27.5603  &  4   &    1.0750 &    2.0625 &  \multicolumn{1}{c|}{35} &  30.6915 &     4  &    1.0500  &    2.3125 \\ \hline
		\multicolumn{1}{|c|}{0.7} & 27.7625 &   4    &    0.9500 &   2.0250 & 	\multicolumn{1}{c|}{50} &  38.4988 &    6  &    0.9250  &   3.0875  \\ \hline
		\multicolumn{1}{|c|}{1.0} & 28.0240 &   4   &    0.7875  &    1.9875  & 	\multicolumn{1}{c|}{65} & 44.6010 &    7 &    0.8500  &  3.6875 \\ \hline
		\multicolumn{1}{|c|}{1.5} & 28.3421  &  4    &    0.6000  &    1.9625 & 	\multicolumn{1}{c|}{85} &  50.9093  &  8 &    0.7750   &   4.3625
 \\ \hline
	   	\end{tabular}
}
		 \label{NTtable-4}
\end{table}
\begin{table}[H]
\caption{ The minimum Bayes risk and optimum DSP for Type-I censoring as $r_s$  varies }
\vspace{1em}
\centering
\resizebox{8.5cm}{0.1\textheight}{
		\begin{tabular}{|c|cccc|}\hline
	  $r_s$  & $r( n_0, \tau_{0},\zeta_0)$ & $n_0$ &  $\tau_{0}$ & $\zeta_0$  \\ \hline
		0.05 &   27.5361  &  4   &    1.0500 & 2.0500  \\ \hline
	    0.10 & 27.5112  &  4  &   1.0375  &    2.0500  \\ \hline
		0.20 & 27.4589  &  4   &    0.9875  &    2.0375  \\ \hline
	    0.30 &  27.4025  &  4 &    0.9125  &  2.0125  \\ \hline
		0.40 & 27.3100  &  5 &   0.6875  &  2.1000 	\\ \hline
	   	\end{tabular}
}
		 \label{NTtable-5}
\end{table}
\noindent From Table \ref{NTtable-5} it is that as the salvage value $r_s$ increases, the  minimum Bayes risk and the $\tau_0$ decrease.
\section{\sc Conclusion}
\label{S_7}
In this work, we have considered the sampling plan in the life testing experiment under Type-I and  Type-I hybrid  censoring scheme where lifetimes are exponentially distributed with parameter $\lambda$. We have proposed that a decision theoretic sampling plan (DSP) can be obtained by using a suitable estimator of $\lambda$, in place of the estimator of mean lifetime $\theta=\frac{1}{\lambda}$. The proposed estimator of $\lambda$ always exists for censored samples. Moreover, we have developed a methodology for finding a DSP using a decision function based on this estimator of  $\lambda$ under Type-I and Type-I hybrid censoring. Numerically it is observed that the optimum DSP is better than sampling plans of \cite{Lam1994bayesian}, \cite{lin2008-10exact} and as good as a Bayesian sampling plan in terms of Bayes risk for Type-I and  Type-I hybrid censoring.  The main advantage of our study is that the proposed sampling plan can be used
quite conveniently for higher degree polynomial and for non-polynomial loss functions without any additional effort as compared to the 
existing BSP. 

\section*{\sc Acknowledgements:} The authors would like to thank two unknown reviewers and the Associate Editor for their constructive 
comments which have helped to improve the manuscript significantly.

\section{ \sc Appendix}
\label{S_8}
\subsection{\sc Proof of Theorem \ref{T_3.1}}
\begin{proof}
The Bayes risk of DSP with respect to the loss function (\ref{new-loss}) is given by  
\bea
 r(n, \tau, \zeta)
            & = & n(C_s-r_s)+ E(M)r_{s} + \tau C_{\tau} + a_0 + a_1 \mu_1 + a_2 \mu_2 \nonumber \\
            &  & \quad + \int_{0}^{\infty}(C_r - a_0 - a_1 \lambda-a_2 \lambda ^2)P(\widehat{\lambda} \geq \zeta)\frac{b^a}{\Gamma (a) }\lambda^{a-1}e^{-\lambda b} d\lambda \nonumber \\
            & = & n(C_{s}-r_{s}) + E_{\lambda}E_{X/\lambda}(M)r_{s} + \tau C_{\tau}+a_0 +a_1 \mu_1 +a_2 \mu_2 \nonumber\\
            &  & \quad + \sum_{l=0}^{2}C_{l}\frac{b^a}{\Gamma (a)}\int_{0}^{\infty} \lambda^{a+l-1} e^{-\lambda b}P(\widehat{\lambda} \geq \zeta)\ d\lambda,  \label{bayes-risk}
\eea
where $C_l$ is defined as 
\begin{equation}
C_l =
\begin{cases}
C_r- a_l  &\mbox{if} \quad l=0,\\
-a_l      &\mbox{if} \quad l= 1,2.
\end{cases}
\end{equation}
Using Lemma \ref{L_3.1} in (\ref{bayes-risk}) we get 
\bea
&  & \int_{0}^{\infty}\lambda^{a+l-1}e^{-\lambda b}P(\widehat{\lambda} \geq \zeta)\ d\lambda \nonumber \\
&  & = \int_{0}^{\infty}\lambda^{a+l-1}e^{-\lambda (b+n\tau)}\ d\lambda \ \textit{I}_{(\zeta=0)} +\sum_{m=1}^{n}\sum_{j=0}^{m}\binom{n}{m}\binom{m}{j}(-1)^j \nonumber \\
&  & \hspace{3.5cm}\times \int_{0}^{\infty} \int_{\zeta}^{\frac{1}{\tau_{j,m}}}\lambda^{a+l+m-1}\frac{e^{-\lambda\{b+\frac{m}{y} \}}}{y^2}\big(\frac{1}{y}-\tau_{j,m}\big)^{m-1}dy \ d\lambda \nonumber \\
&  & = \frac{\Gamma{(a+l)}}{(b+n\tau)^{(a+l)}} \textit{I}_{(\zeta=0)} + \sum_{m=1}^{n}\sum_{j=0}^{m}\binom{n}{m}\binom{m}{j}(-1)^j\frac{(m)^m}{\Gamma (m)}\int_{\zeta}^{\frac{1}{\tau_{j,m}}}\frac{\big(\frac{1}{y}-\tau_{j,m}\big)^{m-1}\Gamma{(a+l+m)}}{y^2\{b+\frac{m}{y}\}^{a+l+m}}dy \nonumber\\
&  & = \frac{\Gamma{(a+l)}}{(b+n\tau)^{(a+l)}} \textit{I}_{(\zeta=0)} + \sum_{m=1}^{n}\sum_{j=0}^{m}\binom{n}{m}\binom{m}{j}\frac{(m)^m (-1)^j}{\Gamma (m)}\int_{0}^{\frac{1}{\zeta}-\tau_{j,m}}\frac{v^{m-1}\Gamma{(a+l+m)}}{\{b+m\tau_{j,m}+mv\}^{a+l+m}}dv.    
\ \ \ \ \ \ \label{step-2}    
\eea
Using $\ds C_{j,m}=b+m \tau_{j,m}$ in (\ref{step-2}), we can write
\begin{align}
\sum_{m=1}^{n}&\sum_{j=0}^{m}\binom{n}{m}\binom{m}{j}(-1)^j\frac{(m)^m}{\Gamma (m)}\frac{\Gamma{(a+l+m)}}{C_{j,m}^{a+l+m}}\int_{0}^{\frac{1}{\zeta}- \tau_{j,m}}\frac{v^{m-1}}{\Big(1+\frac{mv}{C_{j,m}}\Big)^{a+l+m}}dv \nonumber\\
&=\sum_{m=1}^{n}\sum_{j=0}^{m}\binom{n}{m}\binom{m}{j}(-1)^j\frac{\Gamma{(a+l)}}{(C_{j,m})^{a+l}}\frac{\Gamma{(a+l+m)}}{\Gamma (m)\Gamma{(a+l)}}\int_{0}^{\frac{m(\frac{1}{\zeta}- \tau_{j,m})}{C_{j,m}}}\frac{z^{m-1}}{(1+z)^{a+l+m}}dz.   \label{step-3}
\end{align}
Now taking a transformation $\ds z=u/(1-u)$, we have 
\begin{align*}
    \int_{0}^{C^{*}_{j,m}}\frac{z^{m-1}}{(1+z)^{a+l+m}}dz=\int_{0}^{S^*_{j,m}}u^{m-1}(1-u)^{a+l-1}du=B_{S^*_{j,m}}(m,a+l),
\end{align*} 
where $\ds C^{*}_{j,m}=\frac{m(\frac{1}{\zeta}- \tau_{j,m})}{C_{j,m}}$, \ \ $\ds S^{*}_{j,m}=\frac{C^{*}_{j,m}}{1+C^{*}_{j,m}}$ , \hspace{0.5 cm} and 
\begin{equation}
 B_{x}(\alpha,\beta)=\int_{0}^{x}u^{\alpha -1}(1-u)^{\beta-1}du,  \ \ \ \ \ 0\leq x \leq 1, \nonumber
\end{equation}
is the incomplete beta function. If the cumulative distribution function of the beta distribution is given by $I_x(\alpha,\beta)=B_x(\alpha,\beta)/B(\alpha,\beta),$ 
then using (\ref{step-3}) the Bayes risk is finally obtained as
\begin{align}
r(n,\tau,\zeta) &= n(C_{s}-r_{s}) + E(M)r_{s} +\tau C_{\tau}+a_0 +a_1 \mu_1 +a_2 \mu_2 
 \nonumber \\            
& + \sum_{l=0}^2 C_{l}\frac{b^{a}}{\Gamma (a)}\bigg[\frac{\Gamma{(a+l)}}{(b+n\tau)^{(a+l)}} \textit{I}_{(\zeta=0)}+\sum_{m=1}^{n}\sum_{j=0}^m (-1)^{j}\binom{n}{m}\binom{m}{j} \frac{\Gamma{(a+l)}}{(C_{j,m})^{a+l}}I_{S^*_{j,m}}(m,a+l)\bigg],
\end{align}
\noindent where 
$
E(M) = \sum_{m=1}^{n}\sum_{j=0}^{m}m\binom{n}{m}\binom{m}{j}(-1)^{j}\frac{b^a}{(b+(n-m+j)\tau)^a}  \nonumber 
$.
\end{proof}
In general, for higher degree polynomial i.e for $k>2$, the Bayes risk can be evaluated in a similar way for Type-I censoring.

\subsection{\sc Proof of Theorem \ref{T_3.3}}
\begin{proof}
Note that the Bayes risk can be written as
\begin{align*}
\begin{split}
    r(n, \tau, \zeta) = n(C_s-r_s) + \tau C_{\tau} + E(M)r_{s}+ E_{\lambda}\big\{(a_0 +a_1\lambda+ \ldots +a_k \lambda^k)P(\widehat{\lambda}< \zeta) + C_r P(\widehat{\lambda} \geq \zeta)\big\}.
\end{split}
\end{align*}
Now we know that $a_0+a_1\lambda + \ldots +a_k\lambda^k \geq 0$ and $C_r$, the rejection cost, is non negative. Since 
$(n_0, \tau_0, \zeta_0)$ is the optimal sampling plan so the corresponding  Bayes risk is 
\begin{align}
    r(n_0, \tau_0, \zeta_0) \geq n_0 (C_s-r_s) + \tau_0 C_{\tau}.   \label{eq-14}
\end{align}
Now when $\zeta=0$ we reject the batch without sampling and the corresponding Bayes risk is given by $r(0,0,0) =   C_r.$   
When $ \zeta=\infty$ we accept the batch without sampling and corresponding Bayes risk is given by $r(0,0,\infty) = a_0+a_1\mu_1+ \ldots +a_k\mu_k.$ 
Then the optimal Bayes risk is
\begin{align}
    r(n_0, \tau_0, \zeta_0) \leq min \big\{ r(0,0,0),r(0,0,\infty),r(n, \tau, \zeta')\big\}. \label{eq-15}
\end{align}
Hence from equations (\ref{eq-14}) and (\ref{eq-15}) we have
\begin{align*}
 n_0(C_s-r_s) + \tau_0 C_{\tau} \leq min \big\{ r(0,0,0),r(0,0,\infty),r(n, \tau, \zeta')\big\}.
\end{align*}
from where it follows that 
\begin{align*}
&n_0 \leq min \bigg\{ \frac{C_r}{C_s-r_s},\frac{a_0 + a_1\mu_1 + \ldots + a_k \mu_k}{C_s-r_s},\frac{r(n, \tau, \zeta')}{C_s-r_s}\bigg\} \\
&\tau_0 \leq min \bigg\{ \frac{C_r}{C_{\tau}},\frac{a_0 + a_1\mu_1 + \ldots + a_k \mu_k}{C_{\tau}},\frac{r(n, \tau, \zeta')}{C_{\tau}}\bigg\}.
\end{align*}
\end{proof}
\subsection{\sc Proof of Theorem \ref{T_4.1}}
\begin{proof}
The Bayes risk of DSP with respect to the loss function (\ref{new-loss1}) is given by 
\bea
 r(n, r,\tau, \zeta)
            & = & n(C_s-r_{s}) + E(M)r_{s} + E(\tau^{*}) C_{\tau} + a_0 + a_1 \mu_1 + a_2 \mu_2 \nonumber \\
             &  & \quad + \int_{0}^{\infty}(C_r - a_0 - a_1 \lambda-a_2 \lambda ^2)P(\widehat{\lambda} \geq \zeta)\frac{b^a}{\Gamma (a) }\lambda^{a-1}e^{-\lambda b} d\lambda \nonumber \\
            & = & n(C_{s}-r_{s}) + E(M)r_{s} + E(\tau^{*}) C_{\tau}+a_0 +a_1 \mu_1 +a_2 \mu_2 \nonumber\\
            &  & \quad + \sum_{l=0}^{2}C_{l}\frac{b^a}{\Gamma (a)}\int_{0}^{\infty} \lambda^{a+l-1} e^{-\lambda b}P(\widehat{\lambda} \geq \zeta)\ d\lambda    \label{bayes-risk1}
\eea
where $C_l$ is defined as earlier. Let $\zeta^{*}=max\{\frac{1}{n\tau},\zeta\}$, where $\zeta>0$ and
\bea
 R_{l,j,m} & = & \int_{0}^{\infty}\int_{\zeta^{*}}^{\infty}\lambda^{a+l-1}\frac{e^{-\lambda \{b+\tau(n-m+j)\}}}{y^2}\pi \Big(\frac{1}{y}-\tau_{j,m}; m, m\lambda \Big)dy \ d\lambda     \nonumber\\
  & = & \frac{(m)^m}{\Gamma(m)}\int_{0}^{\infty}\int_{\zeta^{*}}^{\frac{1}{\tau_{j,m}}}\lambda^{a+l+m-1}\frac{e^{-\lambda\{b+\frac{m}{y} \}}}{y^2}\big(\frac{1}{y}-\tau_{j,m}\big)^{m-1}dy \ d\lambda \nonumber \\ 
  & = & \frac{(m)^m}{\Gamma(m)}\int_{\zeta^{*}}^{\frac{1}{\tau_{j,m}}}\frac{\big(\frac{1}{y}-\tau_{j,m}\big)^{m-1}\Gamma{(a+l+m)}}{y^2\{b+\frac{m}{y}\}^{a+l+m}}dy \nonumber\\
  & = & \frac{(m)^m}{\Gamma(m)}\int_{0}^{\frac{1}{\zeta^{*}}-\tau_{j,m}}\frac{v^{m-1}\Gamma{(a+l+m)}}{\{b+m\tau_{j,m}+mv\}^{a+l+m}}dv \nonumber\\
  & = &\frac{(m)^m}{\Gamma(m)}\frac{\Gamma{(a+l+m)}}{C_{j,m}^{a+l+m}}\int_{0}^{\frac{1}{\zeta^{*}}- \tau_{j,m}}\frac{v^{m-1}}{\Big(1+\frac{mv}{C_{j,m}}\Big)^{a+l+m}}dv \nonumber\\
  & = & \frac{\Gamma{(a+l)}}{(C_{j,m})^{a+l}}\frac{\Gamma{(a+l+m)}}{\Gamma (m)\Gamma{(a+l)}}\int_{0}^{\frac{m(\frac{1}{\zeta^{*}}- \tau_{j,m})}{C_{j,m}}}\frac{z^{m-1}}{(1+z)^{a+l+m}}dz, \nonumber
\eea
where $\ds C_{j,m}=b+m \tau_{j,m}$. Now taking a transformation $\ds z=u/(1-u)$, we have 
\begin{align*}
    \int_{0}^{C^{*}_{j,m}}\frac{z^{m-1}}{(1+z)^{a+l+m}}dz=\int_{0}^{S^*_{j,m}}u^{m-1}(1-u)^{a+l-1}du=B_{S^*_{j,m}}(m,a+l),
\end{align*} 
where $\ds C^{*}_{j,m}=\frac{m(\frac{1}{\zeta^{*}}- \tau_{j,m})}{C_{j,m}}$ and \ \ $\ds S^{*}_{j,m}=\frac{C^{*}_{j,m}}{1+C^{*}_{j,m}}$. Using $ B_{x}(\alpha,\beta)$ and $ I_x(\alpha,\beta)$ defined earlier, we obtain the expression
\bea
R_{l,j,m}= \frac{\Gamma{(a+l)}}{(C_{j,m})^{a+l}} 
I_{S^*_{j,m}}(m,a+l). \label{rjm}
\eea
 Using Lemma \ref{L_4.1} in (\ref{bayes-risk1}) and by (\ref{rjm}) we get 
\bea
&  & \int_{0}^{\infty}\lambda^{a+l-1}e^{-\lambda b}P(\widehat{\lambda} \geq \zeta)\ d\lambda \nonumber \\
&  & = \int_{0}^{\infty}\lambda^{a+l-1}e^{-\lambda (b+n\tau)}\ d\lambda \ \textit{I}_{(\zeta=0)} +\sum_{m=1}^{n}\sum_{j=0}^{m}\binom{n}{m}\binom{m}{j}(-1)^j \nonumber \\
&  & \hspace{1.5cm} \times \int_{0}^{\infty}\int_{\zeta^{*}}^{\infty}\lambda^{a+l-1}\frac{e^{-\lambda \{b+\tau(n-m+j)\}}}{y^2} 
\pi \Big(\frac{1}{y}-\tau_{j,m}; m, m\lambda \Big)dy \ d\lambda     \nonumber \\
&  & + \int_{0}^{\infty}\int_{\zeta^{*}}^{\infty}\lambda^{a+l-1}\frac{e^{-\lambda b}}{y^2}
\pi \Big(\frac{1}{y}; r, r\lambda \Big)dy \ d\lambda  + \sum_{k=1}^{r}\binom{n}{r}\binom{r-1}{k-1}(-1)^k \frac{r}{(n-r+k)} \nonumber\\ 
&  & \hspace{1.5cm}\times\int_{0}^{\infty}\int_{\zeta^{*}}^{\infty}\lambda^{a+l-1}\frac{e^{-\lambda \{b+\tau(n-r+k)\}}}{y^2} 
\pi \Big(\frac{1}{y}-\tau_{k,r}; r, r\lambda \Big)dy \ d\lambda  \nonumber \\
&  & = \frac{\Gamma{(a+l)}}{(b+n\tau)^{(a+l)}} \textit{I}_{(\zeta=0)}+\sum_{m=1}^{n}\sum_{j=0}^{m}\binom{n}{m}\binom{m}{j}(-1)^j R_{l,j,m} +R_{l,r-n,r} \nonumber \\
&  & \hspace{6cm} +\sum_{k=1}^{r}\binom{n}{r}\binom{r-1}{k-1}(-1)^k \frac{r}{(n-r+k)} R_{l,k,r}.\nonumber
\eea

\noindent Thus Bayes risk of DSP under Type-I hybrid censoring is given by 
\bea
 r(n, r,\tau, \zeta) & = & n(C_{s}-r_{s}) + E(M)r_{s} + E(\tau^{*}) C_{\tau}+a_0 +a_1 \mu_1 +a_2 \mu_2 \nonumber\\
            &  & \quad + \sum_{l=0}^{2}C_{l}\frac{b^a}{\Gamma (a)}\bigg\{\frac{\Gamma{(a+l)}}{(b+n\tau)^{(a+l)}} \textit{I}_{(\zeta=0)}+\sum_{m=1}^{n}\sum_{j=0}^{m}\binom{n}{m}\binom{m}{j}(-1)^j R_{l,j,m} +R_{l,r-n,r}\nonumber \\
            &  & \hspace{2.5cm}+\sum_{k=1}^{r}\binom{n}{r}\binom{r-1}{k-1}(-1)^k \frac{r}{(n-r+k)} R_{l,k,r}\bigg\},
\eea
\noindent where
\bea
E(M)& = &\sum_{m=1}^{r-1}\sum_{j=0}^{m}m\binom{n}{m}\binom{m}{j}(-1)^{j}\frac{b^a}{(b+(n-m+j)\tau)^a}\nonumber \\ 
& & \hspace{2.5cm}+\sum_{k=r}^{n}\sum_{j=0}^{k}r\binom{n}{k}\binom{k}{j}(-1)^j\frac{b^a}{(b+(n-k+j)\tau)^a} \nonumber \\
E(\tau^{*})& = & r\binom{n}{r}\sum_{j=0}^{r-1}\binom{r-1}{j}(-1)^{r-1-j}\bigg\{\frac{b}{(n-j)^2(a-1)}- \frac{tb^a}{(n-j)((n-j)\tau+b)^a} \nonumber \\
&  & -\frac{b^a}{(n-j)^2(a-1)((n-j)\tau+b)^{a-1}}\bigg\} +  \sum_{k=r}^{n}\sum_{j=0}^{k}\tau\binom{n}{k}\binom{k}{j}(-1)^j\frac{b^a}{(b+(n-k+j)\tau)^a}. \nonumber 
\eea
For computation of $E(M)$ and $E(\tau^{*})$ see \cite{liang2013optimal}.
\end{proof}
In general, for higher degree polynomial, i.e., for $k>2$, the Bayes risk can be evaluated in a similar way for   Type-I hybrid censoring.

\section*{\sc References}
\addcontentsline{toc}{section}{Bibliography}
\small
\begingroup
\renewcommand{\section}[2]{}
\bibliographystyle{plainnat}

\begin{thebibliography}{99}

\bibitem[Bartholomew(1963)]{bartholomew1963sampling}  
 D. J. Bartholomew \textquotedblleft The sampling distribution of an estimate arising in life testing, \textquotedblright\ \emph{Technometrics, 361-374, 1963}.
 
\bibitem[Hald(1967)]{hald} A. Hald, ``Asymptotic properties of Bayesian single sampling plans.`` \emph{ Journal of the Royal Statistical Society, Ser. B, vol. 29, 162–173, 1967}.

\bibitem[Fertig and Mann(1974)]{FM1974}  K. W. Fertig, and  N. R. Mann, ``A decision-theoretic approach to defining variables sampling plans for finite lots:  
single sampling for Exponential and  Gaussian processes.'' \emph{ Journal of the American Statistical Association, vol. 69, 
 665–671, 1974}.

\bibitem[Herstein(1975)]{Herstein1975}  
I. N. Herstein \textquotedblleft Topics in Algebra,\textquotedblright\ \emph{ John Wiley \& Sons, New York, 1975}.

\bibitem[Lam(1988)]{yeh1988} 
Y. Lam, \textquotedblleft Bayesian approach to single variable sampling plans, \textquotedblright\ \emph{Biometrika, 387–391, 1988}.

\bibitem[Lam(1990)]{yeh1990optimal} 
Y. Lam, \textquotedblleft An optimal single variable sampling plan with censoring,\textquotedblright\ \emph{The Statistician, 53–66, 1990}.

\bibitem[Lam(1994)]{Lam1994bayesian} 
Y. Lam, \textquotedblleft Bayesian variable sampling plans for the exponential distribution with Type-I censoring,\textquotedblright\ \emph{The Annals of Statistics, 696–711, 1994}.

\bibitem[Lam and Choy(1995)]{yeh1995bayesian} 
Y. Lam and S. Choy, \textquotedblleft Bayesian variable sampling plans for the exponential distribution with uniformly distributed random censoring,\textquotedblright\ \emph{Journal of Statistical Planning and Inference, 277–293, 1995}.

\bibitem[Lin \textit{et al.}(2002)]{lin2002bayesian} 
Y. Lin, T. Liang, and W. Huang, \textquotedblleft Bayesian sampling plans for exponential distribution based on type-I censoring data,\textquotedblright\ \emph{Annals of the Institute of Statistical Mathematics, 100–113, 2002}.

\bibitem[Huang and Lin(2002, 2004)]{Huang2002-4lin}
W. T. Huang, Y. P. Lin,  \textquotedblleft An improved Bayesian sampling plan for exponential population with
type I censoring. \textquotedblright\ \emph{Communications in Statistics: Theory and Methods 31:2003–2025, 2002}.

\bibitem[ Childs \textit{et al.}(2003)]{childs2003exact} 
A. Childs, B. Chandrasekar, N. Balakrishnan, and D. Kundu, \textquotedblleft Exact likelihood inference based on Type-I and Type-II hybrid censored samples from the exponential distribution, \textquotedblright\ \emph{Annals of the Institute of Statistical Mathematics, 319–330, 2003}.

\bibitem[Chen \textit{et al.}(2004)]{Chenchou2004}
J. Chen, W. Chou, H. Wu, H. Zhou,  \textquotedblleft Designing acceptance sampling schemes for life testing
with mixed censoring. \textquotedblright\ \emph{Naval Research Logistics 51:597–612, 2004}.

\bibitem[Huang and Lin(2004)]{Huang2004lin}
 W. T. Huang, Y. P. Lin, \textquotedblleft Bayesian sampling plans for exponential distribution based on uniform random censored data.\textquotedblright\ \emph{ Computational Statistics and Data Analysis 44:669–691, 2004}.

\bibitem[Lin \textit{et al.}(2008, 2008a)]{lin2008-10exact}  
C. Lin, Y. Huang, and N. Balakrishnan, \textquotedblleft Exact Bayesian variable sampling plans for the exponential distribution under type-I censoring, \textquotedblright\ \emph{ In: Mathematical Methods for Survival Analysis, Reliability and Quality of Life (Eds., C. Huber, N. Milnios, M. Mesbah, and M. Nikulin), Hermes, London, 151–162, 2008}.

\bibitem[Lin \textit{et al.}(2008a)]{lin2008aexact}  
C. Lin, Y. Huang, and N. Balakrishnan, \textquotedblleft Exact Bayesian variable sampling plans for the exponential distribution based on Type-I and Type-II  hybrid censored samples, \textquotedblright\ \emph{Communications in Statistics Simulation and Computation, 1101–1116, 2008a}.

\bibitem[Lin \textit{et al.}(2010)]{lin2010co}
C. Lin, Y. Huang, and N. Balakrishnan, \textquotedblleft  Corrections on “Exact Bayesian variable sampling
plans for the exponential distribution under type-I censoring".\textquotedblright\ \emph{ Technical Report, Department of Mathematics, Tamkang University, Tamsui, Taiwan, 2010}.

\bibitem[Lin \textit{et al.}(2010a)]{lin2010aco}
C. Lin, Y. Huang, and N. Balakrishnan, \textquotedblleft Corrections on “Exact Bayesian variable sampling
plans for the exponential distribution based on type-I and type-II hybrid censored samples".\textquotedblright\ \emph{ Communications in Statistics: Simulation and Computation 39:1499–1505, 2010a}.

\bibitem[Liang and Yang(2013)]{liang2013optimal}  
T. Liang and M. C. Yang \textquotedblleft Optimal Bayesian sampling plans for exponential distributions based on hybrid censored samples, \textquotedblright\ \emph{Journal of Statistical Computation and Simulation,922--940, 2013}.

\bibitem[Tsai \textit{et al.}(2014)]{Tsai2014}
 T. R. Tsai, J. Y. Chiang, T. Liang, M. C. Yang, \textquotedblleft Efficient Bayesian sampling plans for exponential distributions with type-I censored samples. \textquotedblright\ \emph{Journal of Statistical Computation and Simulation
84:964–981, 2014}.

\bibitem[Liang \textit{et al.}(2015)]{yang2015optimal}  
T. Liang, M. C. Yang and L. S. Chen  \textquotedblleft Optimal Bayesian variable sampling plans for exponential distributions based on modified type-II hybrid censored samples, \textquotedblright\ \emph{Communications in Statistics-Simulation and Computation, 1-23, 2015}.
\end{thebibliography}


\end{document}